\documentclass[pdflatex,sn-mathphys-num]{sn-jnl}

\usepackage{graphicx,soul}%
\usepackage{epstopdf}
\usepackage{a4wide}
\usepackage{multirow}%
\usepackage{amsmath,amssymb,amsfonts}%
\usepackage{amsthm}%
\usepackage{mathrsfs}%
\usepackage[title]{appendix}%
\usepackage{xcolor}%
\usepackage{textcomp}%
\usepackage{manyfoot}%
\usepackage{booktabs}%
\usepackage{algorithm}%
\usepackage{algorithmicx}%
\usepackage{algpseudocode}%
\usepackage{listings}%
\usepackage{cancel}
\usepackage{enumitem}
\usepackage{thmtools}
\usepackage{thm-restate}

\theoremstyle{thmstyleone}%

\theoremstyle{thmstyletwo}%
\newtheorem{remark}{Remark}%

\theoremstyle{thmstylethree}%
\newtheorem{definition}{Definition}%

\begin{document}

\title[Article Title]{ Modeling the Impact of Immune Boosting on Population-Level Vaccine Effectiveness}

\author*[1]{\fnm{Nir} \sur{Gavish}}\email{ngavish@technion.ac.il}
\author[2]{\fnm{Guy} \sur{Katriel}}\email{katriel@braude.ac.il}
\author[1]{\fnm{Zohar} \sur{Rom-Ramon}}\email{rom.zohar@campus.technion.ac.il}
\affil*[1]{\orgdiv{Faculty of Mathematics}, \orgname{Technion Israel Institute of Technology}, \orgaddress{\street{}, \city{Technion City, Haifa}, \postcode{3200003}, \state{}, \country{Israel}}}

\affil[2]{\orgdiv{Department of Applied Mathematics}, \orgname{Braude College of Engineering}, \orgaddress{\street{}, \city{Karmiel}, \postcode{2161002}, \state{}, \country{Israel}}}

\abstract{We extend the standard susceptible-infected-recovered framework to incorporate natural immune boosting during a short-scale outbreak. By deriving closed-form final size relations, we analytically link total attack rates to boosting dynamics and vaccine coverage. This framework identifies a critical boosting threshold: above it, higher vaccine coverage paradoxically decreases relative vaccine effectiveness. This occurs because successful epidemic suppression deprives vaccinated individuals of the silent pathogen exposures required to maintain their relative immunological advantage. Crucially, the overall population-level impact remains beneficial, consistently reducing absolute disease burden. For highly transmissible variants, asymptotic analysis reveals that relative vaccine effectiveness converges to a positive limit entirely independent of coverage.

\textbf{Relevance to the Life Sciences.} Natural immune boosting, e.g., via low-level or asymptomatic exposures, reinforces individual immunity. We clarify how this biological mechanism translates to population-level metrics, focusing on its impact on the observed effectiveness of leaky vaccines during a single epidemic outbreak. Our key biological finding is that successful vaccination campaigns can inadvertently suppress the very exposures required to sustain a vaccinated cohort's relative advantage. Consequently, public health metrics that ignore boosting risk misinterpreting a successful campaign's lower relative effectiveness as biological vaccine failure. To prevent this epidemiological misinterpretation, surveillance strategies could be adapted to incorporate serial serological surveys and contact tracing to track silent booster accumulation.

\textbf{Mathematical Content.} We introduce an immune-boosted compartment into a deterministic compartmental model featuring leaky vaccination and perfect boosting via regular and weak pathogen exposures. Rather than solving implicit systems, we derive explicit, closed-form final size equations as a function of the cumulative force of exposure. This formulation enables rigorous analysis of vaccine effectiveness across boosting regimes, identifying a critical boosting threshold, fundamentally distinct from the standard epidemic threshold, that dictates the behavior of population-level vaccine metrics. Additionally, a probabilistic framework based on the Poisson distribution isolates multiple exposure effects, validating threshold limits and proving that relative effectiveness decouples from coverage in highly transmissible variants.}

\keywords{  Epidemic modeling, Vaccine effectiveness, Immune boosting, Cumulative exposure, Population immunity, Differential depletion}

\maketitle 

\section{Introduction}

Vaccination has stood for over two centuries as one of the most effective strategies for controlling infectious diseases, successfully suppressing endemic threats such as measles and polio \cite{plotkina1999vaccination}. The recent global response to the COVID-19 pandemic further underscored this pivotal role, demonstrating how rapid vaccine deployment can fundamentally alter the trajectory of emerging epidemic outbreaks \cite{burki2022omicron,tregoning2021progress}. A central challenge in the design of public health policy is the complex link between individual-level vaccine protection and population-level vaccine impact \cite{ halloran1997study, shim2012distinguishing}. This translation is non-trivial: the population level impact of vaccination is shaped not only by the direct biological protection conferred by the vaccine but also by indirect protection and herd immunity. Mathematical models are indispensable tools for bridging this gap \cite{kretzschmar2019disease}.

Individual vaccine protection is rarely absolute or static. In some cases, protection is subject to temporal waning, as neutralizing antibody titers naturally decay in the months or years following vaccination \cite{menegale2023evaluation, vashishtha2024durability}. Furthermore, a person's baseline immunity is continuously modulated by their unique history of pathogen exposures. For example, re-exposure to a circulating pathogen can efficiently re-stimulate immune memory, thus lowering an individual’s susceptibility to future disease.  Notably, such natural immune boosting can be induced by silent exposures that do not result in symptomatic infection or transmission \cite{barbarossa2017stability,barbarossa2015immuno,barbarossa2015mathematical,dafilis2012influence,lavine2013immune, lavine2011natural,leung2019models}.  

The epidemiology of pertussis serves as a compelling illustration of how immune boosting can significantly alter the population-level implications of vaccination campaigns. Following the introduction of routine vaccination in the 1950s, incidence rates in Massachusetts dropped significantly, suggesting near-elimination after two decades. Yet, despite sustained high vaccination coverage, the disease unexpectedly resurged in the 1980s. This phenomenon was also observed in other countries, prompting investigations that led to several proposed explanations \cite{aguas2006pertussis,lavine2013immune,lavine2011natural, mooi2001adaptation, mooi2009bordetella,yih2000increasing}.
The authors of \cite{lavine2011natural} attribute this effect to natural immune boosting. In the pre-vaccine era, frequent natural exposure to pertussis acted as a ``booster”, maintaining strong personal immunity. However, in the vaccine era, decreased disease circulation led to the elimination of natural boosters, so that immunity faded before individuals were re-exposed.
The study \cite{lavine2011natural} highlighted the role of weak exposures, those insufficient to cause infection, in triggering a boosting response, alongside regular exposures which may either cause infection or boost protection. They deduced this by observing that the probability of experiencing boosting is significantly higher than that of a primary infection.
 
Existing theoretical studies have provided valuable insights into how immune boosting affects the long-term dynamics of endemic diseases. Typically, these studies examine immune boosting in the context of waning immunity, assuming a mean duration of immunity ranging from a couple of years to a decade or more. For instance, \cite{lavine2013immune,lavine2011natural} investigated the impact of immune boosting on long-term pertussis dynamics, while  \cite{barbarossa2017stability, barbarossa2015immuno,barbarossa2015mathematical, dafilis2012influence} developed general mathematical frameworks for modeling the interplay between waning immunity and immune boosting.

Shifting from endemic scenarios to single outbreaks, a recent study by Park et al. \cite{park2023immune} incorporated immune boosting of vaccinated individuals into an SIR model. By focusing on a short-term outbreak, their model naturally omits waning immunity and vital dynamics, aiming instead to bridge different vaccination models through the boosting mechanism. Building on this foundation, our focus turns to the largely unexplored impact of immune boosting on population-level protection during single, short-scale epidemics.

In this work, we aim to quantify the effects of immune boosting on the dynamics of a single epidemic occurring over a short time scale, so that waning immunity and vital dynamics can be neglected. Building on the framework established by \cite{park2023immune}, we investigate an extended SIR model with a leaky vaccine to understand the key population-level effects of immune boosting. 

Similarly to \cite{park2023immune}, we assume perfect boosting: vaccinated individuals who get exposed and remain uninfected can gain full and permanent immunity. Based on the findings from \cite{lavine2011natural}, our model distinguishes between two types of encounters: regular exposures, which are sufficient to cause infection, and weak exposures, which cannot cause infection but can still lead to boosting. 
Ultimately, we aim to answer the question: how does immune boosting influence population-level vaccine effectiveness in a single outbreak?

To address this question, we derive closed-form final size relations that explicitly link the attack rates to the boosting parameter, vaccine coverage, and the cumulative force of exposure, an approach that allows for a simplified analysis without the need to solve implicit final size equations. This analytical framework enables us to quantify both overall population-level protection and the relative protection of vaccinated versus unvaccinated individuals. 

Using analytical and numerical methods, we show that while immune boosting consistently enhances overall protection, relative vaccine effectiveness (${\rm VE}$), that is the protection gained by vaccinated individuals relative to unvaccinated individuals, depends in a more complex way on the model's parameters. 
We establish that under infrequent boosting, relative vaccine effectiveness declines as the total scale of the outbreak (cumulative force of exposure) increases, similar to the scenario with no boosting \cite{halloran1999design}. However, we show that under sufficiently strong immune boosting, effectiveness increases with the cumulative force of exposure, as vaccinated individuals disproportionately benefit from these silent, immunity-reinforcing encounters. This mechanism drives a {\em vaccine effectiveness paradox}: when boosting strength is above a threshold, increasing vaccine coverage paradoxically reduces  relative ${\rm VE}$. Higher coverage suppresses the epidemic, thereby depriving vaccinated individuals of the frequent natural exposures needed to fully maintain their relative advantage. However, as expected, the absolute protection, represented by ${\rm VE}_{\rm overall}$, still increases with increasing vaccine coverage. In addition, in classic SIR models with very high reproduction numbers, assuming a leaky vaccine, virtually the entire population becomes infected, causing vaccine effectiveness to drop to zero. Here, we demonstrate that when immune boosting is present, relative protection against highly transmissible variants converges to a positive limit that is entirely independent of vaccine coverage. We elucidate these findings using a  probabilistic framework. This approach recovers the critical boosting threshold, offers an intuitive explanation for the vaccine effectiveness ``paradox”, and captures the asymptotic behavior of $\rm{VE}$ for highly transmissible pathogens.

The remainder of this paper is organized as follows. Section \ref{sec:math_model} introduces the extended SIR model with vaccination and immune boosting and derives final size equations. In Section \ref{sec:VE}, we analyze the behavior of vaccine effectiveness, present the vaccine effectiveness ``paradox", and study the asymptotic behavior of highly transmissible variants.
Section \ref{sec:approximateVE} then introduces an individual-level probabilistic framework to provide a clear mechanistic explanation for these counterintuitive results. Finally, in Section \ref{sec:discussion}, we provide concluding remarks and discuss future directions.

\section{Compartmental Model}\label{sec:math_model}
We formulate a compartmental SIR transmission model to investigate the population-level dynamics of immune boosting during a single epidemic wave in a closed population. Due to the short time scale of the outbreak,  we neglect demographic turnover (births and deaths) and the waning of immunity among recovered or vaccinated individuals. In this model, we incorporate a leaky vaccine, which provides vaccinated individuals with partial protection by reducing their susceptibility. Furthermore, we assume all vaccines are administered initially, and there is no continuous vaccination during the pandemic wave.
Thus, the population is divided into two primary groups: vaccinated and unvaccinated. 
The state variables $S_i, I_i,$ and $R_i$ denote the fractions of the population that are susceptible, infected, and recovered, respectively, where the subscript $i \in \{u, v\}$ indicates the vaccination status. 

Beyond the standard SIR model, we consider the possibility of immune boosting, where pathogen exposure can trigger an immune response among susceptible vaccinated individuals who are exposed but do not become infected. In our model, boosting is defined as an event in which these susceptible individuals acquire robust, non-waning immunity. Therefore, they become equivalent to those in the $R_v$ compartment. To better reveal the impact of boosting, we define a dedicated compartment, $R_b$, for the vaccinated individuals who have undergone such immune boosting. 
The model distinguishes between two types of pathogen exposures:
\begin{itemize}
    \item Regular Exposures: These represent typical exposures with the pathogen that can lead to infection in both vaccinated and unvaccinated individuals. These exposures also exist in standard SIR models. However, in our framework, these exposures can trigger immune boosting if a vaccinated person is exposed but does not become infected.

    \item Weak Exposures: These encounters involve low pathogen loads that are insufficient to cause infection. Because they never result in illness, they are entirely overlooked by classical SIR models. Nevertheless, following \cite{lavine2011natural}, we account for the possibility that these exposures can induce an immune boosting response, conferring immunity against future exposures to the pathogen.
\end{itemize}
The model equations are as follows:
\begin{subequations}\label{eq:Model}
\begin{align}
S_u^\prime(t)&=-s\lambda(t)S_u(t)\label{eq:Su},\\
I_u^\prime(t)&=s\lambda(t)S_u(t)-\gamma_uI_u(t)\label{eq:Iu},\\
R_u^\prime(t)&=\gamma_uI_u(t)\label{eq:ru},\\
S_v^\prime(t)&=\rm{\underbrace{-s\epsilon\lambda(t)S_v(t)}_{Infection}}-\underbrace{sq_s(1-\epsilon)\lambda(t)S_v(t)}_{Boosting- regular\, exposure}-\underbrace{q_w(1-s)\lambda(t)S_v(t)}_{Boosting- weak\, exposure}\label{eq:Sv}\\
&=-(\delta+s\epsilon)\lambda(t)S_v(t) \nonumber,\\
I_v^\prime(t)&=s\epsilon\lambda(t)S_v(t)-\gamma_vI_v(t)\label{eq:Iv},\\
R_v^\prime(t)&=\gamma_vI_v(t)\label{eq:rv},\\
R_b^\prime(t)&=\underbrace{sq_s(1-\epsilon)\lambda(t)S_v(t)}_{\rm Boosting-regular\, exposure}+\underbrace{q_w(1-s)\lambda(t)S_v(t)}_{\rm Boosting- weak\, exposure}=\delta\lambda(t)S_v(t),\label{eq:R_b}
\end{align}
where the total interaction rate (the force of exposure), $\lambda(t)$, takes the form
\begin{equation}\label{eq:lambda}
    \lambda(t)=\beta_uI_u(t)+\beta_vI_v(t),
\end{equation}
see Figure \ref{fig:diagramofmodel} for the compartmental flow of the model.
We consider the initial conditions
\begin{equation}\label{eq:InitialConditions}  
\begin{split}
   &S_v(0)=\phi-I_v(0),\quad S_u(0)=1-\phi-I_u(0),\\
&I_u(0)+I_v(0)>0,\\
&R_v(0)=R_u(0)=R_b(0)=0, 
\end{split} 
\end{equation}
where $\phi \in (0,1)$ is the proportion of individuals that are vaccinated prior to the epidemic. The model assumes that the population mixes homogeneously, and that the initial population is normalized to one,
\begin{equation}\label{eq:notmalizedIC}
    \begin{split}
    S_u(0)+S_v(0)+I_u(0)+I_v(0)+R_u(0)+R_v(0)+R_b(0)=1.
    \end{split}
\end{equation}
\end{subequations}
\begin{figure}[ht!]
    \centering
\includegraphics[width=0.8\linewidth]{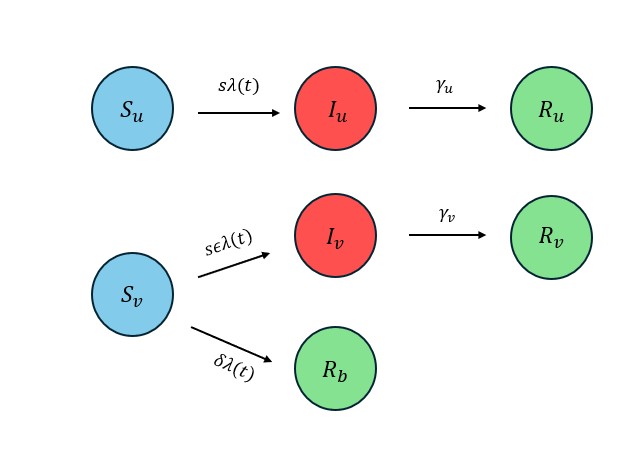}
    \caption[Diagram of disease dynamics]
    {Flow diagram of the compartmental SIR model incorporating immune boosting. The model tracks two parallel populations: unvaccinated (top) and vaccinated (bottom). Unvaccinated susceptibles ($S_u$) transition to the infected class ($I_u$) at a rate $s\lambda(t)$. For vaccinated susceptibles ($S_v$), the leaky vaccine reduces the infection rate to $s\epsilon\lambda(t)$. $S_v$ individuals who are exposed to the pathogen but do not become infected, either through regular or weak exposures, transition directly to the immune-boosted compartment ($R_b$) at a rate $\delta\lambda(t)$, acquiring full and permanent immunity.}
    \label{fig:diagramofmodel}
\end{figure}
The model parameters, along with their epidemiological definitions and ranges, are summarized in Table~\ref{tab:parameters}.
The parameters $\beta_i$ describe the transmission coefficients and $\gamma_i$ describe the recovery rates for $i\in \{u, v\}$. We consider $\beta_i, \gamma_i>0$.
The parameter $\epsilon\in (0,1)$ represents the relative susceptibility of vaccinated individuals, so that $1-\epsilon$ is the vaccine efficacy against infection in vaccinated individuals.
\begin{table}[htbp]
\centering
\caption{Summary of Model Parameters}
\label{tab:parameters}
\small
\begin{tabular}{lp{8.5cm}}
\hline
\textbf{Parameter} & \textbf{Description} \\ \hline
$\beta_i$ & Transmission coefficients ($i \in \{u, v\}$) \\
$\gamma_i$ & Recovery rates ($i \in \{u, v\}$) \\
$\epsilon$ & Relative susceptibility of vaccinated individuals \\
$1-\epsilon$ & Vaccine efficacy against infection \\
$s$ & Proportion of regular exposures among all exposures \\
$q_s$ & probability of boosting following regular exposure\\
$q_w$ & Probability of boosting following weak exposure \\
$\delta$ & Composite effective boosting parameter \\ \hline
\end{tabular}
\end{table}

We define $s\in(0, 1]$ as the proportion of regular exposures among all types of exposure, and $q_s$ as the probability of experiencing boosting following a regular exposure that does not result in infection. Weak exposures occur with probability $1-s$, and the probability they will lead to boosting among vaccinated individuals is $q_w$.
The parameters $q_s$, $q_w$ appear in the model through the composite parameter $\delta$:
\begin{equation}\label{eq:delta}
    \delta=(1-\epsilon)sq_s+(1-s)q_w.
\end{equation}
Since $q_s,q_w \in [0,1]$, $\delta \in[0,1-s\epsilon]$.
The parameter $\delta$ serves as the effective {\em boosting parameter}, capturing the combined impact of boosting induced by regular and weak exposures.  Since weak exposures cannot cause disease, vaccine efficacy is measured exclusively against regular encounters. Thus, $\epsilon$ represents the conditional probability that a regular exposure leads to infection.

Note that by setting $s = 1$ and $q_s = 0$ in \eqref{eq:Model}, we fully reduce to the classic SIR model without boosting. Similarly, if we consider $q_s = q_w = 0$ and $s < 1$, which implies $\delta = 0$, we obtain the classic SIR model with a scaled transmission rate, $\tilde{\beta}_i = s\beta_i$. In this scenario, the probability of immune boosting is zero. In addition, $\delta=1-s\epsilon$ if and only if $q_s=q_w=1$. In this case, every exposure leads to infection or boosting, so that a vaccinated susceptible never remains susceptible following exposure.

A standard calculation using the next generation matrix method  (see Appendix \ref{secA:Rv}) yields the reproduction number, $R_v$:  
\begin{equation}\label{eq:Rv}
    \mathcal{R}_v=s\left(\frac{\beta_u (1-\phi)}{\gamma_u}+\frac{\epsilon\beta_v \phi}{\gamma_v} \right). 
\end{equation}  
It is important to note that $\mathcal{R}_v$ is not affected by the boosting mechanism, and is therefore independent of the boosting parameters $q_s$ and $q_w$. This implies that the herd immunity threshold, the minimal vaccine coverage that will prevent an epidemic, is not affected by immune boosting. However, our analysis reveals that boosting does impact the final size of the epidemic.

Our model accounts for possible differences between the unvaccinated and vaccinated transmission rates, $\beta_u$ and $\beta_v$, which may stem from both biological and behavioral factors. Biologically, vaccines typically reduce viral load and may shorten the infectivity period. Therefore, our primary focus is on the scenario where 
\begin{equation}\label{eq:vaccineassumption}
    \frac{\beta_v}{\gamma_v} \leq \frac{\beta_u}{\gamma_u}.
\end{equation}
We will also address risk compensation scenarios for which behavioral shifts, such as unmasking or maintaining regular social routines due to milder infections, can lead to $\beta_v > \beta_u$.  In all such cases, we restrict our analysis to the regime in which the vaccine remains net-beneficial, satisfying
\begin{equation}\label{eq:net_beneficial}
\frac1s\mathcal{R}_v^\prime(\phi)=\frac{\epsilon\beta_v}{\gamma_v}-\frac{\beta_u}{\gamma_u}\le 0.    
\end{equation}

\subsection{Model Simulations}\label{sec:modelsimulations}

\begin{figure}[ht!]
    \centering
    \includegraphics[width=\linewidth]{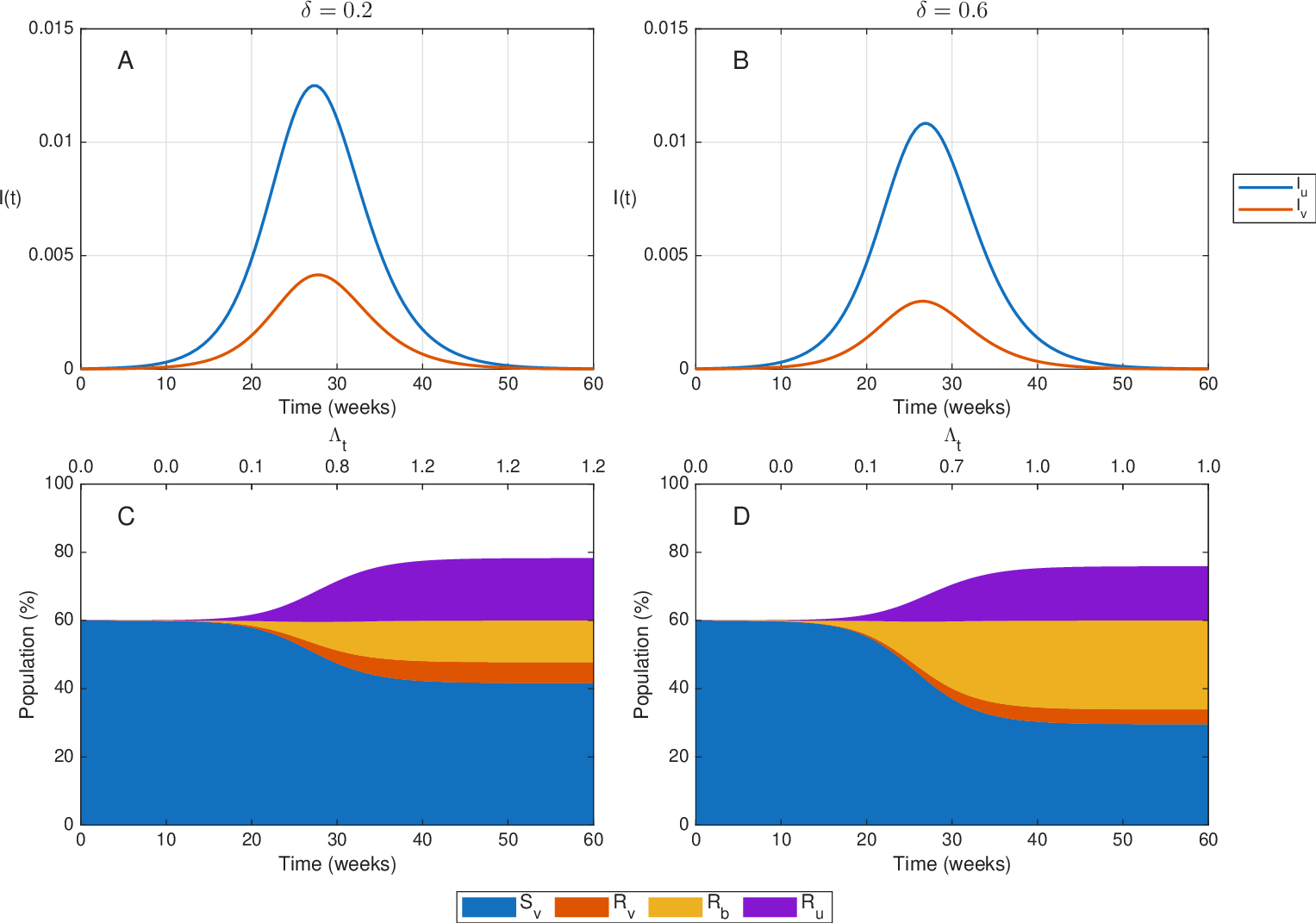}
   \caption{{\bf A.} Time evolution of the $I_u$ and $I_v$ populations for $\delta = 0.2$ (left) and $\delta = 0.6$ (right).
   {\bf B.} Population changes across compartments $S_v, R_v, R_b$, and $R_u$. The left panel simulates a infrequent boosting regime ($\delta = 0.2$), while the right panel simulates a frequent boosting regime ($\delta = 0.6$). Fixed parameters for both simulations are $\beta_u=\beta_v=5$, $\gamma_u=\gamma_v=1$, $\phi = 0.6, \epsilon = 0.2$, and $s = 0.5$. The secondary upper axis indicates the corresponding cumulative force of exposure, $\Lambda_t$.}
    \label{fig:modelsimulations}
\end{figure}
The focus of this work is on the cumulative burden of disease with related  measures such as attack rates and vaccine effectiveness. However, we begin by numerically demonstrating the dynamics of~\eqref{eq:Model}, under regimes of infrequent ($\delta = 0.2$) and frequent ($\delta = 0.6$) immune boosting. For simplicity, we consider a case where both unvaccinated and vaccinated populations share the same transmission coefficient, $\beta_u = \beta_v = 5$/week and  recovery rate $\gamma = 1$/week. Additionally, the proportion of regular interactions that can lead to infection is $s = 0.5$. We consider an initial vaccination coverage of $\phi = 0.6$ with a vaccine efficacy of $1-\epsilon=0.8$. The initial conditions for the infected compartments are $I_u(0)=I_v(0)=10^{-5}$.

Figure~\ref{fig:modelsimulations} illustrates these dynamics. Panel A portrays the size of the infected populations as a function of time within both the unvaccinated and vaccinated compartments, demonstrating the single outbreak dynamics. As expected, Panel B shows that an increased boosting rate, parametrized by~$\delta$, results in a larger proportion of the population transitioning into the boosted compartment, $R_b$, without being infected. While this initial observation is straightforward, the following section reveals unexpected differences in behavior between infrequent ($\delta=0.2$) and frequent ($\delta=0.6$) boosting regimes.

Following \cite{breda2012formulation, sellke1983asymptotic, diekmann2021discrete}, 
we also present epidemic progression as a function of
the cumulative force of exposure,
\begin{equation}\label{eq:lambdat}
\Lambda_t = \int_0^t \lambda(\tau) \, d\tau,
\end{equation}
where $\lambda(t)$ is the force of exposure \eqref{eq:lambda}. This metric is tracked on the secondary upper axis of Figure \ref{fig:modelsimulations}. We observe that $\lim_{t\to\infty}\Lambda_t\approx 1.2$ for $\delta=0.2$ and drops to $\lim_{t\to\infty}\Lambda_t\approx 1$ for $\delta=0.6$, indicating a reduction in the total scale of the outbreak due to immune boosting. In the following section, we demonstrate the utility of analyzing cumulative disease measures as a function of the cumulative force of exposure over the whole course of the epidemic.

\subsection{Final Size Relations}\label{sec:finalsizeLambda}
The cumulative burden of disease is characterized by the attack rates for the unvaccinated ($AR_u$) and vaccinated ($AR_v$) populations, which describe the relative accumulated number of infections in each population.
\begin{definition}\label{def:attackrates}
The attack rates in the unvaccinated and vaccinated population are defined respectively as
\begin{equation}\label{eq:ARdef}
  AR_u=  \frac{R_u(\infty)}{S_u(0)},\quad AR_v= \frac{R_v(\infty)}{S_v(0)}.   
\end{equation}
\end{definition}
We calculate the attack rates as a function of the cumulative force of exposure, \eqref{eq:lambdat}, over the whole course of the epidemic
\begin{equation}\label{eq:Lambdadef}
\Lambda:=\lim_{t\to\infty}\Lambda_t=\int_{0}^{\infty} {\lambda(\tau)\,d\tau}.
\end{equation}
\begin{restatable}{proposition}{FinalSizeLambda}\label{prop:finalsize}
Consider a solution to the system \eqref{eq:Model}. 

 In the limit $\left.I_u(0),I_v(0)\to 0\right.$,  the attack rates, $AR_u$ and $AR_v$, in the unvaccinated and vaccinated populations, are given by: \begin{subequations}\label{eq:FinalSizeLambda}
    \begin{align}
     &AR_u=1-\exp{(-s\Lambda)}\label{eq:finalsizeULambda},\\
     &AR_v=\frac{s\epsilon}{\delta+s\epsilon}(1-\exp{\left(-(\delta+s\epsilon) \Lambda\right)}),\label{eq:finalsizeVLambda}
    \end{align}   
     where the cumulative force of exposure, $\Lambda$, is the solution of the equation:
     \begin{equation}
     \Lambda= \frac{\beta_u(1-\phi)}{\gamma_u}(1-\exp{(-s\Lambda)})+\frac{\beta_v\phi s\epsilon}{(\delta+s\epsilon)\gamma_v}\left(1-\exp{(-(\delta+s\epsilon)\Lambda)}\right).\label{eq:finalsizelambda}
    \end{equation}
\end{subequations}
\end{restatable}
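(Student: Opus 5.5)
We reparametrize the dynamics by the cumulative force of exposure $\Lambda_t$ from \eqref{eq:lambdat}, which linearizes the susceptible equations. Since $\lambda(t)\ge 0$, $\Lambda_t$ is nondecreasing. Equations \eqref{eq:Su} and \eqref{eq:Sv} are linear in $S_u,S_v$ with coefficients proportional to $\lambda(t)$, so they integrate explicitly:
\[
S_u(t)=S_u(0)e^{-s\Lambda_t},\qquad S_v(t)=S_v(0)e^{-(\delta+s\epsilon)\Lambda_t}.
\]
Positivity and boundedness of all compartments follow in the standard way. The total population is conserved: summing \eqref{eq:Su}--\eqref{eq:R_b} gives zero, and \eqref{eq:notmalizedIC} holds.

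Next we establish the limits $t\to\infty$. The quantities $S_i$ are nonincreasing and $R_i,R_b$ are nondecreasing, and all are bounded, so they converge. From \eqref{eq:ru}, $\int_0^\infty\gamma_uI_u\,dt=R_u(\infty)<\infty$, and similarly for $I_v$. Hence $\Lambda=\int_0^\infty(\beta_uI_u+\beta_vI_v)\,dt<\infty$. Since $I_i$ has a bounded derivative and is integrable, $I_i(\infty)=0$.

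We then compute the attack rates. Integrating \eqref{eq:Iu} over $[0,\infty)$ gives
\[
0-I_u(0)=S_u(0)-S_u(\infty)-R_u(\infty).
\]
Therefore $R_u(\infty)=S_u(0)(1-e^{-s\Lambda})+I_u(0)$. For the vaccinated class, the infection flux $s\epsilon\lambda S_v$ is a fixed fraction of the total outflow $(\delta+s\epsilon)\lambda S_v$. Integrating \eqref{eq:Iv} therefore gives
\[
R_v(\infty)-I_v(0)=\frac{s\epsilon}{\delta+s\epsilon}\,S_v(0)\bigl(1-e^{-(\delta+s\epsilon)\Lambda}\bigr).
\]
Dividing by $S_u(0)$ and $S_v(0)$ yields \eqref{eq:finalsizeULambda} and \eqref{eq:finalsizeVLambda}, up to the error terms $I_i(0)/S_i(0)$, which vanish in the limit.

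For the equation for $\Lambda$, we integrate \eqref{eq:lambda} and use $\int_0^\infty I_i\,dt=(R_i(\infty)-R_i(0))/\gamma_i=R_i(\infty)/\gamma_i$. This gives
\[
\Lambda=\frac{\beta_u}{\gamma_u}R_u(\infty)+\frac{\beta_v}{\gamma_v}R_v(\infty).
\]
Substituting the expressions above, with $S_u(0)=1-\phi-I_u(0)$ and $S_v(0)=\phi-I_v(0)$, gives \eqref{eq:finalsizelambda} plus $O(I_u(0)+I_v(0))$ terms.

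The main obstacle is the limit $I_u(0),I_v(0)\to 0$. The exact identity is a perturbed fixed-point equation $\Lambda=F(\Lambda)+O(\eta)$, where $\eta=I_u(0)+I_v(0)$. The trivial root $\Lambda=0$ of the limiting equation must be excluded when $\mathcal R_v>1$. I would note that $F$ is concave with $F(0)=0$ and $F'(0)=\mathcal R_v$ from \eqref{eq:Rv}. So when $\mathcal R_v>1$ there is a unique positive root $\Lambda^*$, and $F(\Lambda)-\Lambda$ changes sign transversally there. The perturbed equation has a root near $\Lambda^*$ by the implicit function theorem. A separate argument is then needed to show that the actual $\Lambda$ stays bounded away from zero: $\Lambda\ge \Lambda$ of the early exponential phase, or equivalently $F(\Lambda)-\Lambda+c\,\eta=0$ with $c>0$ forces $\Lambda$ near $\Lambda^*$ rather than near $0$ when initial infecteds are a positive perturbation. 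I would first try the direct sign argument: for small $\Lambda$, the left side $F(\Lambda)-\Lambda\approx(\mathcal R_v-1)\Lambda>0$, and the positive $\eta$ term adds to it. Hence the exact equation has no small roots, and $\Lambda\to\Lambda^*$. When $\mathcal R_v\le 1$, $\Lambda\to0$, which is the trivial solution of \eqref{eq:finalsizelambda}.
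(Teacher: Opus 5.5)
Your proof is correct and follows the paper's route. You write $S_u,S_v$ explicitly in terms of the cumulative force of exposure, integrate $S_u+I_u$ and the weighted combination $(\delta+s\epsilon)I_v+s\epsilon S_v$ to obtain $\gamma_i\int_0^\infty I_i\,dt$, and substitute into $\Lambda=\int_0^\infty\lambda\,dt$. You also keep the $O(I_u(0)+I_v(0))$ terms and justify that the limit selects the nontrivial root, which the paper leaves implicit. That step is cleanest if you note that the exact perturbed function (right-hand side minus $\Lambda$) is strictly concave, positive at $\Lambda=0$, and strictly above its unperturbed counterpart, so its unique positive root exceeds $\Lambda^*$ and converges to it.
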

\begin{proof}
    See Appendix \ref{sec:FinalsizeappLambda}.
\end{proof}
Since \eqref{eq:finalsizeULambda}, \eqref{eq:finalsizeVLambda} expresses $AR_u$ and $AR_v$ as a function of $\Lambda$, we characterize the conditions under which \eqref{eq:finalsizelambda} has a unique solution.

\begin{restatable}{proposition}{existenceLambda} \label{prop:existenceLambda}
 Equation \eqref{eq:finalsizelambda} has a unique positive solution if $\mathcal{R}_v>1$, and no positive solution if $\mathcal{R}_v\leq 1$. 
\end{restatable}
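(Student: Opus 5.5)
Write the right-hand side of \eqref{eq:finalsizelambda} as $F(\Lambda)=A(1-e^{-s\Lambda})+B(1-e^{-c\Lambda})$, where
\[
A=\frac{\beta_u(1-\phi)}{\gamma_u}>0,\qquad B=\frac{\beta_v\phi s\epsilon}{(\delta+s\epsilon)\gamma_v}>0,\qquad c=\delta+s\epsilon>0 .
\]
Positive solutions of \eqref{eq:finalsizelambda} are then positive zeros of $G(\Lambda)=F(\Lambda)-\Lambda$ on $(0,\infty)$. The plan is a concavity argument. $F$ is a sum of two strictly concave, increasing, bounded functions, so $G$ is strictly concave on $[0,\infty)$. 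Moreover $G(0)=0$ and $G(\Lambda)\to-\infty$ as $\Lambda\to\infty$, since $F$ is bounded by $A+B$.

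The key computation is the slope at the origin:
\[
G'(0)=F'(0)-1=As+Bc-1=s\frac{\beta_u(1-\phi)}{\gamma_u}+\frac{s\epsilon\beta_v\phi}{\gamma_v}-1=\mathcal{R}_v-1 .
\]
This uses \eqref{eq:Rv}. The factor $c=\delta+s\epsilon$ cancels exactly in the vaccinated term, and this is why boosting drops out of the threshold.

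For the case $\mathcal{R}_v>1$ we have $G'(0)>0$, so $G>0$ on some interval $(0,\eta)$. Since $G\to-\infty$, the intermediate value theorem gives a zero $\Lambda^*>0$. For uniqueness, strict concavity together with $G(0)=0$ implies that $G$ has at most two zeros on $[0,\infty)$. If there were two positive zeros $0<\Lambda_1<\Lambda_2$, then $G$ would vanish at the three points $0,\Lambda_1,\Lambda_2$. Writing $\Lambda_1$ as a convex combination of $0$ and $\Lambda_2$, strict concavity would force $G(\Lambda_1)>0$, a contradiction.

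For the case $\mathcal{R}_v\le 1$ we have $G'(0)\le 0$. Strict concavity gives $G'(\Lambda)<G'(0)\le 0$ for $\Lambda>0$, so $G$ is strictly decreasing. Hence $G(\Lambda)<G(0)=0$ for all $\Lambda>0$, and there is no positive solution.

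The only real point needing care is the identification $F'(0)=\mathcal{R}_v$, that is, checking that the $(\delta+s\epsilon)$ normalization in $B$ cancels correctly. Everything else is the standard one-dimensional concave fixed-point argument. I expect no genuine obstacle.
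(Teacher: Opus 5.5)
Your proposal is correct and follows the paper's own proof essentially step for step. Both arguments use strict concavity of $G(\Lambda)=F(\Lambda)-\Lambda$, $G(0)=0$, $G'(0)=\mathcal{R}_v-1$, and $G\to-\infty$. The intermediate value theorem then gives existence, and concavity gives uniqueness and rules out a positive root when $\mathcal{R}_v\le 1$. Your explicit three-point convexity argument is a more detailed version of the paper's remark that strict concavity allows at most one positive crossing.
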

\begin{proof}
    See Appendix \ref{sec:UninqueLamdaapp}.
\end{proof}

\section{Vaccine Effectiveness}\label{sec:VE}
In this section, we investigate the population-level impact of vaccination, with a specific focus on the effects of immune boosting. To do so, we employ the standard metric of population-level vaccine effectiveness, $\mathrm{VE}$, defined as \cite{farrington1993estimation,halloran1999design, orenstein1985field,shim2012distinguishing}:
\begin{equation}\label{eq:VE}
    \mathrm{VE} := 1 - \frac{AR_v}{AR_u},
\end{equation}
where $AR_v$ and $AR_u$ are the attack rates among vaccinated and unvaccinated individuals, respectively, see Definition~\eqref{def:attackrates}. 
Vaccine effectiveness quantifies the protection that vaccinated individuals receive over the course of the entire epidemic relative to unvaccinated individuals. From the perspective of a potential vaccinee, it addresses a fundamental question: What is the individual gain from vaccination? That is, how much does it reduce my chance of eventually becoming infected throughout the course of the epidemic?

In the absence of boosting, a highly contagious epidemic results in virtually all individuals becoming infected when the vaccine is leaky, regardless of vaccination status. Consequently, vaccine effectiveness approaches zero~\cite{halloran1999design}. More generally, we expect $\mathrm{VE}$ to decline as the total scale of the outbreak (cumulative force of exposure) increases. Therefore, since greater coverage suppresses epidemic spread, vaccine effectiveness is expected to increase with higher vaccine coverage.

In what follows, we demonstrate that the presence of immune boosting can invert these expected trends. Specifically, we establish that under infrequent boosting, vaccine effectiveness decreases as the total scale of the outbreak increases, as in the scenario with no boosting. In contrast, under stronger boosting conditions, vaccine effectiveness increases with epidemic spread. We call this the {\em vaccine effectiveness paradox}, wherein $\mathrm{VE}$ declines as vaccine coverage increases, even though the total epidemic burden continues to decrease. We formalize these findings below through a detailed analysis of $\mathrm{VE}$ under distinct boosting regimes.

\subsection{Analysis of~${\rm VE}$ - Distinct Boosting Regimes}\label{sec:VE_analysis}

Substituting Equations \eqref{eq:finalsizeULambda}, \eqref{eq:finalsizeVLambda} into~\eqref{eq:VE}, we obtain~${\rm VE}$ as an explicit function of the cumulative force of exposure $\Lambda$,
\begin{equation}\label{eq:VELambda}
    {\rm VE}(\Lambda)=1-\frac{s\epsilon}{\delta+s\epsilon}\frac{1-\exp(-(\delta+s\epsilon)\Lambda)}{1-\exp(-s\Lambda)}.
\end{equation}
In the classical SIR model without boosting, it is well-established that~${\rm VE}$ decays monotonically with~$\Lambda$ so that it is bounded by vaccine efficacy, ${\rm VE} < 1 - \epsilon$ \cite{halloran1999design}. This monotonicity and the related bound is confirmed by applying Equation \eqref{eq:VELambda} with $\delta = 0$ and $s = 1$, which yields:
\begin{equation*}
    {\rm VE}=1-\frac{1-\exp(-\epsilon\Lambda)}{1-\exp(-\Lambda)}.
\end{equation*}
Notably, while this behavior holds for the standard model, we demonstrate in the following section that it does not extend to scenarios involving sufficiently strong boosting.
\begin{restatable}{proposition}{VEanalyticMonotone} \label{prop:VEanalyticMonotone}
Let the vaccine effectiveness, ${\rm VE}(\Lambda)$, be defined as in Equation \eqref{eq:VELambda},
where $\Lambda > 0$ is the cumulative force of exposure, with parameters $s, \epsilon \in (0,1)$ and $\delta \in(0,1-s\epsilon)$.
Let
\[
\delta^*=(1 - \epsilon)s.
\]
Then, the monotonicity of ${\rm VE}(\Lambda)$ with respect to $\Lambda$ is determined by the value of $\delta$ relative to $\delta^*$:
\begin{enumerate}
    \item If $\delta < \delta^*$, ${\rm VE}(\Lambda)$ is a strictly decreasing function of $\Lambda$, and ${\rm VE}(\Lambda) < 1 - \epsilon$.
    \item If $\delta = \delta^*$, ${\rm VE}(\Lambda) = 1 - \epsilon$ for all $\Lambda>0$.
    \item If $\delta > \delta^*$, ${\rm VE}(\Lambda)$ is a strictly increasing function of $\Lambda$, and ${\rm VE}(\Lambda) > 1 - \epsilon$.
\end{enumerate}
\end{restatable}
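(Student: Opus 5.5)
Write $a=s$ and $b=\delta+s\epsilon$, so that $b>0$ and
\[
{\rm VE}(\Lambda)=1-\frac{s\epsilon}{b}\,\frac{g(b\Lambda)}{g(a\Lambda)}\cdot\frac{b}{a}\cdot\frac{a}{b},\qquad g(x)=1-e^{-x}.
\]
Equivalently, $1-{\rm VE}(\Lambda)=\epsilon\, F(\Lambda)$ with
\[
F(\Lambda)=\frac{a}{b}\,\frac{1-e^{-b\Lambda}}{1-e^{-a\Lambda}}.
\]
The threshold condition is simple in these variables: $\delta<\delta^*$ iff $b<a$ (since $\delta+s\epsilon<s$), $\delta=\delta^*$ iff $b=a$, and $\delta>\delta^*$ iff $b>a$. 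Case 2 is then immediate: $F\equiv1$, so ${\rm VE}=1-\epsilon$. The plan is to show that $F$ is strictly increasing when $b<a$ and strictly decreasing when $b>a$, with $F(0^+)=1$ in both cases. The bounds then follow by comparing with the limit $\Lambda\to0^+$. Indeed, in case 1, $F$ increasing with $F(0^+)=1$ gives $F>1$, hence ${\rm VE}<1-\epsilon$ and ${\rm VE}$ strictly decreasing. Case 3 is symmetric.

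For the monotonicity, set $h(x)=\log\bigl(1-e^{-x}\bigr)$, so that $\log F = \log(a/b) + h(b\Lambda)-h(a\Lambda)$. Differentiating,
\[
\frac{d}{d\Lambda}\log F = b\,h'(b\Lambda)-a\,h'(a\Lambda) = \frac{1}{\Lambda}\bigl[k(b\Lambda)-k(a\Lambda)\bigr],
\qquad k(x)=x\,h'(x)=\frac{x}{e^{x}-1}.
\]
The sign of $F'$ therefore reduces to the monotonicity of $k$ on $(0,\infty)$. The key claim is that $k$ is strictly decreasing. Given this, $b<a$ gives $k(b\Lambda)>k(a\Lambda)$, so $F'>0$; and $b>a$ gives $F'<0$, as required.

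To prove the key claim, compute
\[
k'(x)=\frac{e^x-1-xe^x}{(e^x-1)^2}.
\]
The numerator $n(x)=e^x-1-xe^x$ satisfies $n(0)=0$ and $n'(x)=-xe^x<0$ for $x>0$, so $n<0$ on $(0,\infty)$. Hence $k'<0$ and $k$ is strictly decreasing.

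The limit $F(0^+)=1$ follows from $1-e^{-c\Lambda}\sim c\Lambda$ as $\Lambda\to0$, which gives
\[
F(\Lambda)\to\frac{a}{b}\cdot\frac{b}{a}=1.
\]
Because $F$ is strictly monotone on $(0,\infty)$, the strict inequalities $F>1$ (case 1) and $F<1$ (case 3) hold for every $\Lambda>0$. This also covers the endpoint restrictions: $\delta\in(0,1-s\epsilon)$ guarantees $b\in(s\epsilon,1)$, but positivity of $b$ is all that is used.

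The only step requiring any insight is recognizing that the derivative of $\log F$ factors through the single function $x/(e^x-1)$. Everything else is routine calculus.
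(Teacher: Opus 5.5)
Your proof is correct and follows essentially the same route as the paper: reduce the monotonicity of $(1-e^{-b\Lambda})/(1-e^{-a\Lambda})$ to a sign condition, then obtain the bounds from the limit $1-\epsilon$ as $\Lambda\to0^+$. The only variation is in how the sign is settled. The paper handles the numerator $g(\Lambda)=b(e^{a\Lambda}-1)-a(e^{b\Lambda}-1)$ via $g(0)=0$ and $g'(\Lambda)=ab\,(e^{a\Lambda}-e^{b\Lambda})$. You instead use that $x/(e^x-1)$ is decreasing, and the two criteria coincide, since $k(b\Lambda)-k(a\Lambda)=\Lambda\, g(\Lambda)/\bigl[(e^{a\Lambda}-1)(e^{b\Lambda}-1)\bigr]$.
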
 

\begin{proof}
    See Appendix \ref{sec:VEproofAPP}.
    Note that the proof relies on Equation~\eqref{eq:VELambda}, rather than on the implicit final size relation in Equation \eqref{eq:finalsizelambda}.
\end{proof}
 Proposition \ref{prop:VEanalyticMonotone} reveals three distinct behavioral regimes for ${\rm VE}$, which depend entirely on the value of the boosting parameter $\delta$, as illustrated in Figure \ref{fig:MastercodeVE}A.
 In particular, when~$\delta>\delta^*$, see dotted yellow curve in Figure \ref{fig:MastercodeVE}A, ${\rm VE}$ increases with~$\Lambda$, which is directly opposite to its behavior in the classical SIR model and in the case of infrequent boosting,~$\delta<\delta^*$, see blue solid curve in Figure \ref{fig:MastercodeVE}A.
 \begin{figure}[ht!]
\includegraphics[width=\linewidth]{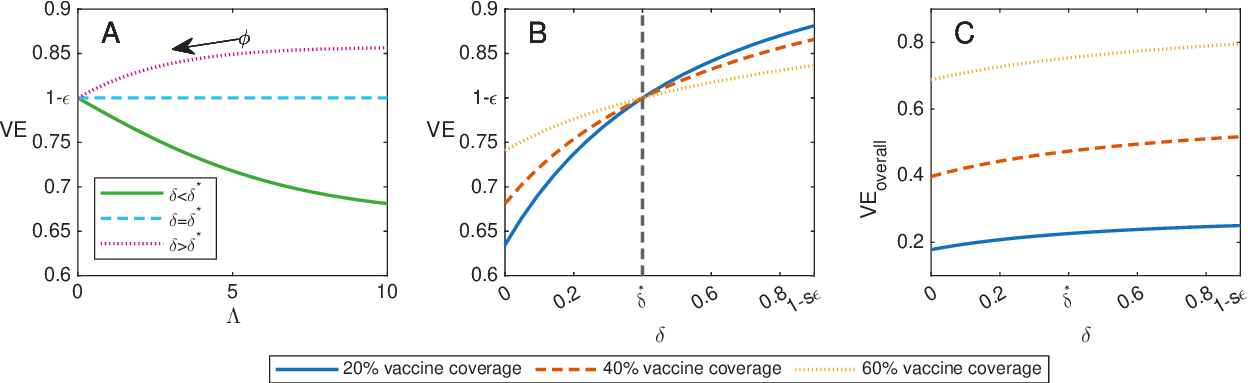}
   \caption{{\bf A.} Vaccine effectiveness (${\rm VE}$) as a function of the cumulative force of exposure ($\Lambda$), using the same parameters as in Figure \ref{fig:modelsimulations}. All curves intersect near $\Lambda = 0$, where they equal $1 - \epsilon$, and subsequently diverge depending on the value of $\delta$. The arrow for vaccine coverage ($\phi$) illustrates the ${\rm VE}$ paradox demonstrated in Proposition \ref{prop:vaccination_boosting}: when $\delta>\delta^*$, vaccine coverage reduces the cumulative force of exposure, thereby surprisingly decreasing vaccine effectiveness. {\bf B.} Vaccine effectiveness calculated as a function of the boosting parameter, $\delta$,  for $\phi=0.2, 0.4, 0.6$. All other parameters remain as in Figure \ref{fig:modelsimulations}$: \epsilon=0.2$, $\beta_u=\beta_v=5$, $\gamma_u=\gamma_v=1$, $s=0.5$. The corresponding reproduction numbers are: $\mathcal{R}_v= 2.1, 1.7, 1.3$. The plot clearly demonstrates the critical boosting threshold at $\delta^* = s(1-\epsilon)$ (dashed vertical line), where ${\rm VE}$ exactly matches individual-level vaccine efficacy ($1-\epsilon$) independent of coverage. Notably, in the frequent boosting regime ($\delta > \delta^*$), the graph illustrates the ``${\rm VE}$ paradox", wherein increasing the vaccine coverage in the population paradoxically yields a lower relative vaccine effectiveness. {\bf C.} Overall vaccine effectiveness as a function of $\delta$, utilizing the exact same coverage levels and parameter values as in panel B.}
    \label{fig:MastercodeVE}
\end{figure}
  To gain an intuitive understanding of these results, it is helpful to first distinguish our perspective from prior work. We are primarily interested in the behavior of ${\rm VE}(\Lambda)$ as a function of the total cumulative exposure, $\Lambda$, over the whole epidemic. In contrast, the established literature has predominantly examined the monotonic decline of ${\rm VE}$ as the epidemic progresses, that is, as exposure, $\Lambda_t$, gradually accumulates over time, even while true vaccine efficacy, $1 - \epsilon$, remains constant~\cite{halloran1992interpretation, kahn2024examining,lipsitch2019depletion, lipsitch2021interpreting,nikas2023competing,smith1984assessment,tokars2020waning}. Because Expression \eqref{eq:VELambda} formulates ${\rm VE}$ fundamentally as a function of cumulative exposure, it provides a conceptual bridge between these two perspectives, allowing the intuition developed for progressing epidemics to inform our whole-epidemic approach.
  
 This phenomenon is driven by what the epidemiological literature describes as ``differential depletion of susceptibles''~\cite{kahn2024examining}, the ``leaky vaccine'' effect~\cite{tokars2020waning}, or the ``frailty effect'' (also referred to as ``survivor bias'')~\cite{nikas2023competing}. As an epidemic progresses, the susceptible population at risk reduces due to infections. These reductions are inherently greater in the highly susceptible pool than in the group that is less at risk, e.g., the vaccinated group. As a result, the unvaccinated pool is depleted more rapidly, causing its incidence rate to decline over time relative to that in the vaccinated group. Consequently, the relative benefit of the vaccine wanes with an increases in the cumulative force of exposure.

 Within our framework, the instantaneous, per capita rate at which unvaccinated individuals leave the susceptible pool equals $s\lambda(t)=(\delta^*+s\epsilon)\lambda(t)$, while the rate at which vaccinated individuals leave their susceptible compartment, via either infection or immune boosting, is $(\delta + s\epsilon)\lambda(t)$. The ratio of these depletion rates, \[\frac{\delta+s\epsilon}{\delta^*+s\epsilon},\] perfectly dictates the three distinct structural regimes established in Proposition \ref{prop:VEanalyticMonotone},  as illustrated in Figure \ref{fig:MastercodeVE}A:
 
\textbf{Case 1: $\delta < \delta^*$ (Weak immune boosting).} 
In this scenario, the unvaccinated cohort experiences a greater proportional depletion of its susceptible population than the vaccinated group as cumulative exposure increases. This creates a classic differential depletion of susceptibles bias~\cite{kahn2024examining,nikas2023competing,tokars2020waning}

\textbf{Case 2: $\delta > \delta^*$ (Strong Immune Boosting).} 
Under strong immune boosting, the depletion pattern inverts. Frequent exposures shuttle vaccinated individuals into a protected state without infection. Consequently, the vaccinated susceptible pool experiences a greater proportional depletion than the unvaccinated cohort across cumulative exposure.

\textbf{Case 3: $\delta = \delta^*$ (Balanced Dynamics).} 
In this case, the proportional depletions of the two susceptible pools are perfectly balanced across all levels of exposure, eliminating the asymmetry that drives the survivorship bias. The exponential terms in \eqref{eq:VELambda} cancel entirely, so that vaccine effectiveness remains constant regardless of cumulative exposure, equaling the baseline vaccine efficacy, ${\rm VE}(\Lambda) = 1 - \epsilon$  (see dashed red line in Figure \ref{fig:MastercodeVE}A).

\subsection{The Vaccine Effectiveness Paradox and Overall Impact}
Conventionally, higher vaccine coverage is expected to increase vaccine effectiveness by reducing population-level transmission pressure. This intuition is supported by classical frameworks~\cite{halloran1999design}, where increased vaccine coverage~$\phi$, results in higher vaccine effectiveness. However, in the presence of strong immune boosting ($\delta > \delta^*$), we now show that this relationship is reversed. 
We refer to this phenomenon, where broader population protection diminishes vaccine effectiveness, as the {\em Vaccine Effectiveness Paradox}.

Since in the case $\delta>\delta^*$, ${\rm VE}(\Lambda)$ is strictly increasing (Proposition~\ref{prop:VEanalyticMonotone}), the sign of ${\rm VE}'(\phi)$ is governed by $\Lambda'(\phi)$. To show that vaccine effectiveness declines as coverage increases, it is therefore sufficient to establish that $\Lambda'(\phi) < 0$, as illustrated by the arrow in Figure \ref{fig:MastercodeVE}A. 

The following proposition shows that for the case $\frac{\beta_v}{\gamma_v}\leq \frac{\beta_u}{\gamma_u}$, $\Lambda'(\phi)<0$ for all boosting values $\delta>0$.

\begin{restatable}{proposition}{vaccinationboosting} \label{prop:vaccination_boosting}
Suppose that $\mathcal{R}_v > 1$, and that condition \eqref{eq:vaccineassumption} holds. For all $\delta>0$, the cumulative force of exposure $\Lambda$, which is the solution of Equation \eqref{eq:finalsizelambda}, strictly decreases as vaccine coverage $\phi$ increases, i.e.,
\begin{equation}
    \frac{\partial \Lambda}{\partial \phi} < 0.
\end{equation}
\end{restatable}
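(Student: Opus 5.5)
Write the final-size equation \eqref{eq:finalsizelambda} as $F(\Lambda,\phi)=0$, where
\[
F(\Lambda,\phi)=\Lambda - a(1-\phi)\,g(s,\Lambda) - b\phi\, h(\Lambda),
\]
with the abbreviations
\[
a=\frac{\beta_u}{\gamma_u},\qquad b=\frac{\beta_v}{\gamma_v},\qquad g(s,\Lambda)=1-e^{-s\Lambda},\qquad h(\Lambda)=\frac{s\epsilon}{\delta+s\epsilon}\bigl(1-e^{-(\delta+s\epsilon)\Lambda}\bigr).
\]
By Proposition~\ref{prop:existenceLambda}, when $\mathcal{R}_v>1$ there is a unique positive root $\Lambda(\phi)$. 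I will apply the implicit function theorem, which gives
\[
\frac{\partial\Lambda}{\partial\phi}=-\frac{F_\phi}{F_\Lambda}.
\]
It then suffices to show $F_\Lambda>0$ and $F_\phi>0$ at the root.

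\emph{Step 1: $F_\Lambda>0$.} Write $G(\Lambda)=a(1-\phi)g+b\phi h$. This function is strictly concave, satisfies $G(0)=0$, and has $G'(0)=\mathcal{R}_v>1$. Hence $G(\Lambda)-\Lambda$ is strictly concave, vanishes at $0$, and vanishes again at the positive root. By concavity, its slope at the second zero is strictly negative, so $G'(\Lambda)<1$ there, i.e.\ $F_\Lambda=1-G'(\Lambda)>0$. (This is the same concavity argument presumably used for Proposition~\ref{prop:existenceLambda}.)

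\emph{Step 2: $F_\phi>0$.} We have $F_\phi=a\,g(s,\Lambda)-b\,h(\Lambda)$, so we need
\[
b\,\frac{s\epsilon}{\delta+s\epsilon}\bigl(1-e^{-(\delta+s\epsilon)\Lambda}\bigr) < a\bigl(1-e^{-s\Lambda}\bigr).
\]
Since $b\le a$ by \eqref{eq:vaccineassumption}, it is enough to prove $h(\Lambda)<g(s,\Lambda)$, equivalently $AR_v<AR_u$ for every $\Lambda>0$ and every $\delta>0$. Set $k=\delta+s\epsilon$, so the claim is
\[
\frac{1-e^{-k\Lambda}}{k}<\frac{1-e^{-s\Lambda}}{s\epsilon}.
\]
I will split into two cases.
\begin{itemize}
\item If $k\ge s$: the map $x\mapsto(1-e^{-x\Lambda})/x$ is decreasing in $x$, so the left side is at most $(1-e^{-s\Lambda})/s$, which is strictly less than $(1-e^{-s\Lambda})/(s\epsilon)$ because $\epsilon<1$.
\item If $k<s$: use $1-e^{-k\Lambda}<1-e^{-s\Lambda}$ together with $1/k<1/(s\epsilon)$, the latter holding because $\delta>0$.
\end{itemize}
In fact the $k<s$ argument works in both cases, since $1-e^{-k\Lambda}\le 1$ is not needed. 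More simply, the inequality follows from $\dfrac{s\epsilon}{k}<1$ together with monotonicity of $(1-e^{-x\Lambda})/x$, which I will state cleanly.

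\emph{Main obstacle.} The delicate point is Step 2 when $k>s$, that is, when boosting is strong. There $1-e^{-k\Lambda}$ exceeds $1-e^{-s\Lambda}$, so the naive termwise comparison used for $k<s$ fails. This is why I rely on the monotone decrease of $(1-e^{-x\Lambda})/x$ in $x$, which follows from convexity of $e^{-x\Lambda}$. The strictness of the final inequality comes from $\epsilon<1$ in this case, and from $\delta>0$ in the case $k<s$. Combining Steps 1 and 2 gives $\partial\Lambda/\partial\phi<0$.
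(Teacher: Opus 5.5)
Your proof is correct and takes essentially the paper's route: implicit differentiation of the final-size equation, concavity to fix the sign of $F_\Lambda$ at the positive root, and a case split on $\delta+s\epsilon$ versus $s$, using that $x\mapsto(1-e^{-x\Lambda})/x$ is decreasing in the strong-boosting case and a termwise comparison otherwise; your $k\ge s$ case also covers $\delta=\delta^*$, which the paper's proof does not treat explicitly. The one slip is your aside that the $k<s$ argument works in both cases, which is false because $1-e^{-k\Lambda}<1-e^{-s\Lambda}$ fails for $k>s$; if you want a single argument, use $k>s\epsilon$ and the decrease of that map to get $\frac{s\epsilon}{k}(1-e^{-k\Lambda})<1-e^{-s\epsilon\Lambda}<1-e^{-s\Lambda}$.
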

\begin{proof}
    See Appendix \ref{sec:VEparadoxappendix}.
\end{proof}
Proposition \ref{prop:vaccination_boosting} formalizes the conditions under which increasing vaccine coverage $\phi$ reduces the cumulative force of exposure $\Lambda$ ($\Lambda'(\phi) < 0$). This suppression of transmission chains is the primary driver of the vaccine effectiveness paradox observed in the frequent boosting regime ($\delta > \delta^*$). Because ${\rm VE}$ is a strictly increasing function of $\Lambda$ in this regime (see Proposition \ref{prop:VEanalyticMonotone}), the success of a public health campaign in lowering disease pressure counterintuitively drives the observed ${\rm VE}$ downward, see Figure \ref{fig:MastercodeVE}A. The crossover behavior demonstrated in Figure \ref{fig:MastercodeVE}B highlights this dynamic: as coverage successfully protects the population, the measured effectiveness of the vaccine declines, prompting the question of how a more successful campaign can lead to an apparently lower vaccine performance.

Ultimately, the driving mechanism of the ``paradox" is the ``over" benefit of vaccination and related immune boosting events. By successfully lowering the disease pressure and reducing the number of pathogen encounters, the public health campaign limits beneficial immune-boosting events, which counterintuitively drives down the relative effectiveness of the vaccine.

It is important to note that while higher vaccine coverage paradoxically lowers this observed relative vaccine effectiveness when boosting is high, it still successfully reduces the total disease burden, that is overall infections across the population. To demonstrate this, and to show that the paradox is an artifact of the mathematical interplay between relative attack rates, we evaluate the population-level impact using the overall vaccine effectiveness, ${\rm VE}_{\rm overall}$ \cite{shim2012distinguishing}. This metric captures the total vaccination impact by comparing the final epidemic size, $Z$, to a baseline scenario where no vaccination is given:
\begin{equation}\label{eq:VEoverall}
    {\rm VE}_{\rm overall} = 1 - \frac{Z(\delta, \epsilon, s, \phi)}{Z(\delta, \epsilon, s, \phi=0)},
\end{equation}
where the total final size $Z$ is the weighted sum of the attack rates in each group:
\begin{equation}\label{eq:Z}
    Z = (1-\phi)AR_u + \phi AR_v.
\end{equation}
In Equation \eqref{eq:VEoverall}, we represent the absence of vaccination by setting the vaccine coverage parameter to $\phi=0$.

Notably, ${\rm VE}_{\rm overall}$ depends on the specific boosting probabilities, $q_s$ and $q_w$, exclusively through the aggregate parameter $\delta$. 
Using Proposition \ref{prop:vaccination_boosting}, we prove the following:
\begin{restatable}{proposition}{VEoevrall} \label{prop:VEoevrall}
Assuming condition \eqref{eq:vaccineassumption} holds, for all $\delta>0$, ${\rm VE}_{\rm overall}$ increases with the vaccine coverage, $\phi$.
\end{restatable}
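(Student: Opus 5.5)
Since the denominator $Z(\delta,\epsilon,s,\phi=0)$ in \eqref{eq:VEoverall} does not depend on $\phi$, the plan is to show that the total final size $Z(\phi)$ in \eqref{eq:Z} is strictly decreasing in $\phi$ wherever $\mathcal{R}_v>1$. We can assume $\mathcal{R}_v(0)>1$, because otherwise there is no epidemic at all and the statement is vacuous. By \eqref{eq:Rv} and \eqref{eq:net_beneficial}, which follows from \eqref{eq:vaccineassumption} since $\epsilon<1$, $\mathcal{R}_v$ is nonincreasing in $\phi$. So the coverage range splits into two parts.
\begin{itemize}
\item On the range where $\mathcal{R}_v\le1$, Proposition~\ref{prop:existenceLambda} gives $\Lambda=0$, hence $Z=0$ and ${\rm VE}_{\rm overall}=1$, which is its maximal value.
\item On the range where $\mathcal{R}_v>1$, Proposition~\ref{prop:existenceLambda} gives a unique $\Lambda(\phi)>0$. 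This is differentiable in $\phi$ by the implicit function theorem, as already used for Proposition~\ref{prop:vaccination_boosting}.
\end{itemize}
Continuity of $\Lambda$ at the threshold, where $\Lambda\to0$, glues the two ranges together.

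On the epidemic range, write $Z=(1-\phi)AR_u(\Lambda)+\phi AR_v(\Lambda)$ using \eqref{eq:finalsizeULambda} and \eqref{eq:finalsizeVLambda}, and differentiate:
\[
\frac{dZ}{d\phi}=\bigl(AR_v(\Lambda)-AR_u(\Lambda)\bigr)+\Bigl((1-\phi)\frac{dAR_u}{d\Lambda}+\phi\frac{dAR_v}{d\Lambda}\Bigr)\frac{\partial\Lambda}{\partial\phi}.
\]
The second term is $\le0$. Both attack rates are increasing in $\Lambda$, with derivatives $s e^{-s\Lambda}>0$ and $s\epsilon e^{-(\delta+s\epsilon)\Lambda}>0$, and $\partial\Lambda/\partial\phi<0$ by Proposition~\ref{prop:vaccination_boosting}, which holds for every $\delta>0$ under \eqref{eq:vaccineassumption}.

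The remaining step, which I expect to be the only real obstacle, is to show that the first term is strictly negative, i.e.\ $AR_v<AR_u$ for all $\Lambda>0$ and all $\delta>0$. This includes the weak-boosting regime, where Proposition~\ref{prop:VEanalyticMonotone} gives no sign information beyond ${\rm VE}<1-\epsilon$. I will use the elementary fact that $c\mapsto(1-e^{-c\Lambda})/c$ is strictly decreasing on $c>0$. Applying it with $c=\delta+s\epsilon>s\epsilon$ gives
\[
AR_v=s\epsilon\,\frac{1-e^{-(\delta+s\epsilon)\Lambda}}{\delta+s\epsilon}<1-e^{-s\epsilon\Lambda}<1-e^{-s\Lambda}=AR_u,
\]
where the last inequality uses $\epsilon<1$. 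Alternatively, this also follows directly from ${\rm VE}>0$.

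Combining the two terms gives $dZ/d\phi<0$ throughout the epidemic range. Together with $Z\equiv0$ beyond the herd-immunity threshold and continuity at that threshold, this shows $Z$ is nonincreasing overall and strictly decreasing while an epidemic occurs. Hence ${\rm VE}_{\rm overall}=1-Z(\phi)/Z(0)$ increases with $\phi$, for every $\delta>0$.
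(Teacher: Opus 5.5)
Your proof is correct. It follows the same skeleton as the paper's: $AR_u$ and $AR_v$ increase in $\Lambda$, and $\partial\Lambda/\partial\phi<0$ by Proposition~\ref{prop:vaccination_boosting}. It is, however, more careful in two places, and both are worth keeping.

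\textbf{The weights in $Z$.} The paper's appendix argues that $Z$ increases with $\Lambda$ and $\Lambda$ decreases with $\phi$, hence $Z$ decreases with $\phi$. This silently ignores that the weights in $Z=(1-\phi)AR_u+\phi AR_v$ depend on $\phi$ explicitly. Your chain-rule decomposition exposes the extra term $AR_v-AR_u$. Your bound $AR_v<1-e^{-s\epsilon\Lambda}<AR_u$ then supplies the sign that the paper's argument needs but never checks. It uses the same decreasing function $h(x)=(1-e^{-x\Lambda})/x$ that the paper uses in proving Proposition~\ref{prop:vaccination_boosting}.

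\textbf{The range beyond herd immunity.} Proposition~\ref{prop:vaccination_boosting} only applies while $\mathcal{R}_v>1$, so the paper's argument says nothing past the herd-immunity threshold, where ${\rm VE}_{\rm overall}\equiv 1$. Your treatment correctly gives the accurate conclusion: ${\rm VE}_{\rm overall}$ is nondecreasing in $\phi$, and strictly increasing while an epidemic occurs.

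Two small remarks:
\begin{enumerate}
\item The gluing does not need continuity of $\Lambda$ at the threshold. Under \eqref{eq:vaccineassumption}, $\mathcal{R}_v$ is strictly decreasing in $\phi$, so the epidemic range is an initial interval on which $Z>0$, and $Z=0$ after it. Monotonicity across the threshold follows immediately.
\item Your aside that $AR_v<AR_u$ ``follows directly from ${\rm VE}>0$'' relies on something the paper does not state for $\delta<\delta^*$. Proposition~\ref{prop:VEanalyticMonotone} only gives ${\rm VE}<1-\epsilon$ there. Positivity would additionally need the limit ${\rm VE}\to\delta/(\delta+s\epsilon)>0$ as $\Lambda\to\infty$, together with monotonicity. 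Your direct inequality is therefore the right thing to rely on.
\end{enumerate}
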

\begin{proof}
    See Appendix \ref{sec:VEoverallApp}.
\end{proof}
Unlike ${\rm VE}$, ${\rm VE}_{\rm overall}$ cannot be expressed as a function of a single $\Lambda$, as the numerator and denominator satisfy distinct formulations of Equation \eqref{eq:FinalSizeLambda}. Therefore, the computation of ${\rm VE}_{\rm overall}$ requires solving the implicit final size equation~\eqref{eq:finalsizelambda}. Figure \ref{fig:MastercodeVE}C demonstrates the result of Proposition \ref{prop:VEoevrall}: In the case $\frac{\beta_v}{\gamma_v}\leq\frac{\beta_u}{\gamma_u}$, ${\rm VE}_{\rm overall}$ increases monotonically with vaccine coverage $\phi$, for all $\delta>0$. Critically, this metric exhibits no crossover behavior or critical thresholds. These results confirm that while higher coverage may paradoxically lower the observed vaccine effectiveness measured by~${\rm VE}$, it consistently reduces the total disease burden.

\subsubsection{A note on Net-Beneficial Vaccines}
We can expand the condition \eqref{eq:vaccineassumption} to consider a net-beneficial vaccine \eqref{eq:net_beneficial}, 
meaning that the vaccine's reduction of susceptibility compensates for any potential increase in baseline transmissibility. Following the proof of Proposition \ref{prop:vaccination_boosting}, for $\delta>\delta^*$, it still holds that $\Lambda'(\phi) < 0$ for a net-beneficial vaccine. However, the relationship between coverage and effectiveness is more varied in the infrequent boosting regime ($\delta < \delta^*$) in the case of a net-beneficial vaccine \eqref{eq:net_beneficial}. When $\frac{\beta_v}{\gamma_v} \le \frac{\beta_u}{\gamma_u}$, the aggregate disease pressure is successfully suppressed, and the observed ${\rm VE}$ increases with vaccine coverage $\phi$, as seen in Figure \ref{fig:MastercodeVE}B. However, under conditions where $\frac{\beta_v}{\gamma_v}>\frac{\beta_u}{\gamma_u}$, we observe instances where ${\rm VE}'(\phi)$ is negative or changes sign from negative to positive in the infrequent boosting regime, see the supplementary figures in Appendix~\ref{sec:appendixnumericalsimulations}. 
Thus, the intuitive epidemiological expectation that increased coverage necessarily reduces aggregate disease pressure ($\Lambda'(\phi) < 0$) fails to universally hold in the infrequent boosting regime.
Notably, ${\rm VE}_{\rm overall}$ does not exhibit this behavior. Rather, it consistently increases with $\phi$ for all $\delta$ in the case of a net-beneficial vaccine.

\subsection{Behavior of Highly Transmissible Pathogens}
In this section we focus on the limit of highly transmissible pathogens. This analysis is especially relevant given the emergence of highly transmissible variants like Omicron in COVID-19 \cite{burki2022omicron}.
We derive the attack rates for epidemics with large ${\mathcal R}_v$ by taking $\beta_u\to \infty$ or $\beta_v\to \infty$ in \eqref{eq:finalsizelambda} while holding all other parameters constant, and substituting the result into \eqref{eq:finalsizeULambda} and \eqref{eq:finalsizeVLambda}. 
\begin{restatable}{proposition}{ARasymptoticLambda}\label{prop:ARasymptotitcLambda}
 Let $\phi>0$. The attack rates satisfy:
\begin{equation}\label{eq:ARlargeRv}
    \begin{split}
     &\lim_{\beta_u\to \infty} AR_u=\lim_{\beta_v\to \infty} AR_u=\lim_{\Lambda\to \infty} AR_u=1,\\
     &\lim_{\beta_u\to \infty} AR_v=\lim_{\beta_v\to \infty} AR_v=\lim_{\Lambda\to \infty} AR_v=\frac{s\epsilon}{\delta+s\epsilon}.
    \end{split}  
\end{equation}
\end{restatable}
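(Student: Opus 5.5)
The plan is to reduce everything to showing that $\Lambda\to\infty$ whenever $\beta_u\to\infty$ or $\beta_v\to\infty$. The explicit formulas \eqref{eq:finalsizeULambda} and \eqref{eq:finalsizeVLambda} in Proposition~\ref{prop:finalsize} then finish the argument by direct substitution. Since $s>0$ and $\delta+s\epsilon>0$, both exponentials $\exp(-s\Lambda)$ and $\exp(-(\delta+s\epsilon)\Lambda)$ tend to zero as $\Lambda\to\infty$. Hence $AR_u\to 1$ and $AR_v\to s\epsilon/(\delta+s\epsilon)$, which proves the $\Lambda\to\infty$ limits directly.

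For the $\beta$-limits, first note that $\mathcal{R}_v$ in \eqref{eq:Rv} grows without bound as either $\beta_u\to\infty$ or $\beta_v\to\infty$. This uses $\phi\in(0,1)$, $s,\epsilon>0$ and the other parameters fixed. So eventually $\mathcal{R}_v>1$, and Proposition~\ref{prop:existenceLambda} gives a unique positive root $\Lambda$ of \eqref{eq:finalsizelambda}.

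Next I argue by contradiction that $\Lambda\to\infty$. Write \eqref{eq:finalsizelambda} as $\Lambda = A\,f(\Lambda)+B\,g(\Lambda)$, where
\[
A=\frac{\beta_u(1-\phi)}{\gamma_u},\quad f(\Lambda)=1-e^{-s\Lambda},\qquad B=\frac{\beta_v\phi s\epsilon}{(\delta+s\epsilon)\gamma_v},\quad g(\Lambda)=1-e^{-(\delta+s\epsilon)\Lambda}.
\]
Both $A$ and $B$ are nonnegative. Suppose that along a sequence $\beta_u\to\infty$ the roots $\Lambda$ stayed bounded by some $M$. Then $\Lambda\ge A f(\Lambda)$ forces $f(\Lambda)\le M/A\to 0$, so $\Lambda\to 0$.

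The main obstacle is ruling out this collapse to zero. The paper's result excludes the trivial root, but a positive root could in principle tend to $0$. I would use concavity. The function $h(\Lambda)=A f(\Lambda)+B g(\Lambda)-\Lambda$ is strictly concave with $h(0)=0$ and $h'(0)=s\bigl(A+B(\delta+s\epsilon)/s\cdot\tfrac{1}{1}\bigr)$, which is proportional to $\mathcal{R}_v-1$ in the appropriate normalization. Concretely, $h'(0)=sA+(\delta+s\epsilon)B=\mathcal{R}_v-1$. So the positive root exceeds any fixed $L$ as soon as $h(L)>0$.

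For any fixed $L>0$ we have $h(L)\ge A f(L)-L$, and this tends to infinity as $\beta_u\to\infty$ because $f(L)>0$. Hence $h(L)>0$ for large $\beta_u$. By concavity, $h>0$ on $(0,L]$, so the unique positive root lies beyond $L$. Since $L$ is arbitrary, $\Lambda\to\infty$. The argument for $\beta_v\to\infty$ is identical, using $B\,g(L)-L\to\infty$, which holds since $\phi,\epsilon>0$.

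Combining this with the first step gives all six limits in \eqref{eq:ARlargeRv}.
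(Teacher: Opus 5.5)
Your proof is correct and is essentially the paper's argument: strict concavity of $F$, together with $F(0)=0$ and $F(L)>0$, pushes the unique positive root beyond $L$. The paper does this once with $L=1/2$ and then bootstraps through the fixed-point equation, $\Lambda\ge A(1-e^{-s/2})+B(1-e^{-(\delta+s\epsilon)/2})\to\infty$, whereas you let $L$ be arbitrary. One small slip: $h'(0)=sA+(\delta+s\epsilon)B-1=\mathcal{R}_v-1$, so you dropped the $-1$; this is harmless, since $h(0)=0$, $h(L)>0$ and concavity already give $h>0$ on $(0,L]$ without using $h'(0)$.
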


\begin{proof}
    See Appendix \ref{sec:highRvLambdaApp}. Note that we assume the vaccine is at least net-beneficial \eqref{eq:net_beneficial}. Therefore, if $\beta_v \to \infty$, then $\beta_u \to \infty$.
\end{proof}
For highly transmissible pathogens, nearly all unvaccinated individuals are infected.
Furthermore, in a leaky vaccination SIR model, almost all vaccinated individuals will eventually be infected as well \cite{park2023immune}.
 However, when immune boosting is present, i.e. $\delta > 0$, Equation \eqref{eq:ARlargeRv} demonstrates that not all vaccinated individuals become infected, even during widespread outbreaks, as some of them gain immunity via boosting. Ultimately, only a constant fraction of the vaccinated population, determined solely by boosting dynamics and vaccine efficacy, becomes infected. 

Substituting the limiting attack rates \eqref{eq:ARlargeRv} into the definition of ${\rm VE}_{\rm overall}$, we get:
\begin{equation}\label{eq:VEoverallLArgerv}
\lim_{\beta_u\to\infty} {\rm VE}_{\rm overall}=\lim_{\beta_v\to\infty} {\rm VE}_{\rm overall}=\phi\left(1-\frac{s\epsilon}{\delta+s\epsilon}\right).
 \end{equation}
Particularly, the overall vaccine effectiveness remains proportional to coverage under extreme transmission rates and boosting.

Substituting \eqref{eq:ARlargeRv} into the definition of ${\rm VE}$, we obtain the asymptotic vaccine effectiveness:
\begin{equation}\label{eq:largeRv}
     \lim_{\beta_u\to\infty} \rm{VE}=\lim_{\beta_v\to\infty} \rm{VE}=1-\frac{s\epsilon}{\delta+s\epsilon}.
\end{equation}

 Equation~\eqref{eq:largeRv} formalizes a fundamental distinction in how population-level vaccine effectiveness (VE) behaves under intense transmission. In the absence of boosting, a massive epidemic driven by a highly transmissible pathogen inevitably infects virtually the entire population regardless of vaccination status, driving the VE to zero. In contrast, the boosting mechanism confers complete protection to a subset of individuals, allowing VE to logically escape this collapse and converge to a stable, positive limit. Yet, while this non-zero limit is mathematically expected, Equation~\eqref{eq:largeRv} reveals a genuinely surprising property: this robust relative advantage is structurally invariant, remaining entirely independent of the overall vaccine coverage,$\phi$.

 Although the detailed ODE model captures this behavior, its intertwined feedback loops obscure the precise mechanism driving it. The full framework includes the temporal accumulation of repeated pathogen exposures, the time-dependent slowdown of the epidemic as susceptibles are depleted and the structural variation in~$\mathcal{R}_v$ across different levels of vaccine coverage.
In Section \ref{sec:approximateVE}, we strip the model of these overlapping factors to isolate the core driver. By transitioning to a individual-level probabilistic framework, we demonstrate that the independence of relative vaccine effectiveness from vaccine coverage for highly contagious pathogens is ultimately governed by a single, fundamental mechanism: repeated exposures.

\section{Vaccine Effectiveness Under Repeated Pathogen Exposures}\label{sec:approximateVE}
In this section, we introduce an individual-level probabilistic framework to isolate the direct impact of repeated pathogen exposures. Rather than tracking population-wide transmission dynamics, this framework provides an exact calculation of an individual's probability of infection, conditional on experiencing a specific number of exposures. By focusing on these precise individual trajectories, we clarify the fundamental mechanisms driving the counterintuitive behaviors observed in Section~\ref{sec:VE}. In particular, this exact conditional analysis exposes the structural origin of the critical boosting threshold~$\delta^*$, provides a clear mechanistic basis for the vaccine effectiveness paradox, and demonstrates why relative protection decouples from vaccine coverage under high transmission limits.

The total number of exposures $N$ that an individual experiences during the epidemic is fundamentally determined by the cumulative force of exposure, $\Lambda$, in the full model~\eqref{eq:Model}. Because the instantaneous rate of exposure is given by $\lambda(t)$, the actual number of exposures $N$ for an individual is a random variable following a Poisson distribution. The mean of this distribution is exactly the cumulative force of exposure \eqref{eq:Lambdadef}.
Consequently, the probability that an individual undergoes exactly $n$ exposures is given by
\begin{equation}\label{eq:probability_of_N}
    P(N=n) = e^{-\Lambda} \frac{\Lambda^{n}}{n!}.
\end{equation}

In what follows, we evaluate protection from the perspective of a single individual by conditioning on a fixed number of discrete pathogen exposures, $n$. This conditional framework completely decouples individual outcomes from population-wide transmission dynamics and time-varying risks. While the individual's total probability of infection over the course of the epidemic is recovered by compounding these conditional probabilities with the Poisson distribution of exposures~\eqref{eq:probability_of_N}, analyzing a fixed $n$ isolates the pure, cumulative effect of repeated encounters on a single person. This method of tracking individual exposure histories builds on the framework introduced by \cite{halloran1991direct}, which we extend here to incorporate immune boosting. Because this analysis represents a single individual rather than a population, it cannot capture the group-level shifts driven by vaccination coverage and is therefore unsuitable for evaluating overall vaccine effectiveness, ${\rm VE}_{\rm overall}$.

\subsection{Evaluating ${\rm VE}_n$}
 Define ${\rm VE}_n$ as the vaccine effectiveness given $n$ exposures. It is given by: 
\begin{equation}\label{eq:VEn}
  \rm{VE}_n= 1-\frac{Q_n^{v}}{Q_n^{u}}, 
\end{equation}
 where $Q_n^{u}$ denotes the probability of infection for an unvaccinated individual following $n$ exposures, and $Q_n^{v}$ represents the corresponding probability for a vaccinated individual.

To compute $Q_n^{u}$, we analyze the possible outcomes for a representative unvaccinated individual across multiple disease exposures. Following exposure, an unvaccinated individual becomes infected, $I_u$, with probability $s$, or remains susceptible with the complementary probability $1-s$, see Figure \ref{fig:diagram}. Thus, $Q_{1}^{u}=s$. Since $Q_n^{u}$ is the complement of the survival probability over $n$ exposures to the pathogen, we have:
\begin{equation}\label{eq:Qnudirect}
    Q_n^{u}=1-(1-s)^{n}.
\end{equation}

Similarly, to compute $Q_n^{v}$, we examine  the possible outcomes for representative vaccinated individual across multiple pathogen exposures, see Figure \ref{fig:diagram}. A vaccinated individual transitions to the infected class, $I_v$, if they experience a regular exposure, occurring with probability $s$, that successfully leads to infection, with probability $\epsilon$, resulting in a joint transition probability of $s\epsilon$.
Alternatively, vaccinated individuals may transition to the boosted compartment, $R_b$, where they acquire complete protection against future infections. This boosting process occurs via two distinct pathways: Following a regular exposure that fails to result in infection, with probability $sq_s(1 - \epsilon)$, or following a weak exposure, with probability $q_w(1 - s)$. 
Thus, the resulting probability of remaining susceptible following exposure for vaccinated individuals is $1-\delta-s\epsilon$, as defined in Equation \eqref{eq:delta}.

\begin{figure}[ht!]
    \centering
\includegraphics[width=1\linewidth]{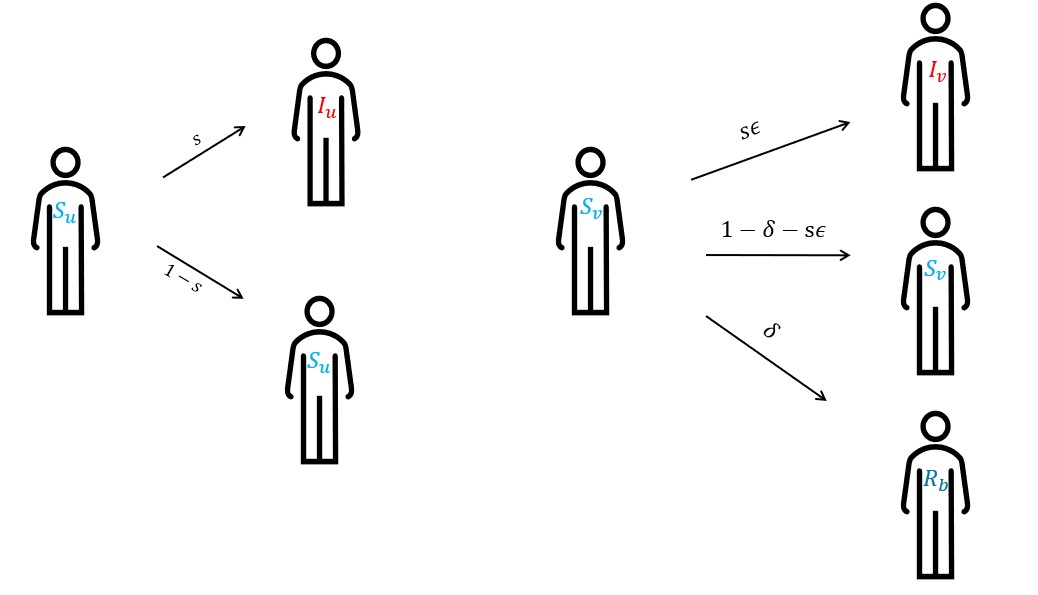}
     \caption[Diagram depicting the possible outcomes following exposure to infection in vaccinated and unvaccinated individuals]
    {Diagram depicting the possible outcomes following exposure to infection in vaccinated and unvaccinated individuals, including infection, unchanged susceptibility, and immune boosting. 
    } 
    \label{fig:diagram}
\end{figure}

The probability of infection for a vaccinated susceptible after one exposure is $\left.Q_{1}^{v}=s\epsilon\right.$.
The probability of infection after $n$ exposures is equal to one minus the probability of remaining susceptible throughout all $n$ exposures without being boosted, and minus the probability of transitioning to a boosted state at the $i$-th exposure , $0\leq i < n$:
\begin{equation}\label{eq:Qnvdirect}
    Q_n^{v}=1-(1-\delta-s\epsilon)^{n}-
    \delta\sum_{i=0}^{n-1} (1-\delta-s\epsilon)^{i}=s\epsilon\left(\frac{1-(1-\delta-s\epsilon)^{n}}{\delta+s\epsilon}\right).
\end{equation}
Substituting \eqref{eq:Qnudirect} and \eqref{eq:Qnvdirect} into \eqref{eq:VEn}, we get that the vaccine effectiveness given $n$ exposures to the pathogen is given by:
\begin{equation}\label{eq:VEncalc}
    {\rm VE}_n=1-\frac{[1-(1-\delta-s\epsilon)^{n}]s\epsilon}{[1-(1-s)^{n}](\delta+s\epsilon)},\quad n\geq 1.
\end{equation}
Note that this expression for vaccine effectiveness following $n$ exposures closely resembles the formula \eqref{eq:VELambda} for real world vaccine effectiveness derived from the full model.

\begin{remark}
It is possible to relate $Q_n^u$ and $Q_n^v$ to $AR_u$ and~$AR_v$, respectively.
The probability of a vaccinated individual being infected throughout the epidemic is:
\begin{equation}\label{eq:connection}
    \sum_{n=0}^{\infty} Q_n^{v} P(N=n)=\frac{s\epsilon}{\delta+s\epsilon}\sum_{n=0}^{\infty} \left(1-(1-\delta-s\epsilon)^{n}\right)e^{-\Lambda}\cdot \frac{\Lambda^{n}}{n!},
\end{equation}
where~$Q_n^v$ is given by~\eqref{eq:Qnvdirect}, and~$P(N=n)$ is given by~\eqref{eq:probability_of_N}, 
Therefore,
\begin{equation*}\begin{split}
    AR_v&= \frac{s\epsilon}{\delta+s\epsilon} e^{-\Lambda} \left[\sum_{n=0}^{\infty}\frac{\Lambda^{n}}{n!} -\sum_{n=0}^{\infty}(1-\delta-s\epsilon)^{n}\cdot \frac{\Lambda^{n}}{n!}\right]\\
    &= \frac{s\epsilon}{\delta+s\epsilon} e^{-\Lambda} \left[ e^{\Lambda}-e^{(1-\delta-s\epsilon)\Lambda}\right]=\frac{s\epsilon}{\delta+s\epsilon}[1-e^{-(\delta+s\epsilon)\Lambda}].
    \end{split}
\end{equation*}
This is exactly Equation \eqref{eq:finalsizeVLambda}. A similar computation can be carried out to relate $Q_n^u$ to $AR_u$.
While this calculation yields Equations  \eqref{eq:finalsizeULambda} and \eqref{eq:finalsizeVLambda}, $\Lambda$ (Equation \eqref{eq:finalsizelambda}) remains undetermined.  Solving for $\Lambda$ requires additional data regarding $\mathcal{R}_v$ and other baseline parameters.
\end{remark}
\subsection{Behavior of 
$\rm{VE}_n$}

We first characterize the behavior of ${\rm VE}_n$ as a function of $\delta$, holding $s$ and $\epsilon$ constant while changing $q_s,q_w$.
\begin{restatable}{proposition}{VEmmonotnedelta}\label{prop:VEmmonotnedelta}
    Let $n>1$. For any $\delta_1>\delta_2$, ${\rm VE}_n(\delta_1)>{\rm VE}_n(\delta_2)$.
\end{restatable}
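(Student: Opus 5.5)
Since $s$ and $\epsilon$ are held fixed, the denominator $Q_n^u=1-(1-s)^n$ in \eqref{eq:VEn} does not depend on $\delta$. By \eqref{eq:VEncalc}, the claim is therefore equivalent to showing that
\[
Q_n^v(\delta)=s\epsilon\,\frac{1-(1-\delta-s\epsilon)^n}{\delta+s\epsilon}
\]
is strictly decreasing in $\delta$ on the admissible range $\delta\in[0,1-s\epsilon]$ whenever $n>1$.

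\textbf{Substitution.} Set $x=\delta+s\epsilon\in[s\epsilon,1]$ and $y=1-x\in[0,1-s\epsilon]$. It then suffices to show that
\[
g(x)=\frac{1-(1-x)^n}{x}
\]
is strictly decreasing on $(0,1]$. The cleanest route is the geometric-sum identity
\[
g(x)=\sum_{i=0}^{n-1}(1-x)^i=\sum_{i=0}^{n-1}y^i .
\]
This also appears directly in \eqref{eq:Qnvdirect}: the probability of infection at the $(i+1)$-th exposure is $s\epsilon\,y^i$.

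\textbf{Monotonicity.} The map $y\mapsto\sum_{i=0}^{n-1}y^i$ is a polynomial with nonnegative coefficients. For $n>1$ it contains the term $y$, so it is strictly increasing for $y\ge0$, with derivative $\sum_{i=1}^{n-1}iy^{i-1}\ge1>0$. Since $y=1-\delta-s\epsilon$ is strictly decreasing in $\delta$, $Q_n^v$ is strictly decreasing in $\delta$. Consequently ${\rm VE}_n=1-Q_n^v/Q_n^u$ is strictly increasing, which gives ${\rm VE}_n(\delta_1)>{\rm VE}_n(\delta_2)$ for $\delta_1>\delta_2$.

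There is no real obstacle; the only points needing care are the endpoints. At $\delta=1-s\epsilon$ we have $y=0$, and the sum representation still holds, so no division issue arises. The role of $n>1$ is made explicit by the case $n=1$: there the sum is the constant $1$, so ${\rm VE}_1=1-\epsilon$ independently of $\delta$, and the strict inequality fails.
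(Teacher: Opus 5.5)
Your proof is correct and follows the paper's argument: both rewrite $\frac{1-(1-\delta-s\epsilon)^n}{\delta+s\epsilon}$ as the geometric sum $\sum_{i=0}^{n-1}(1-\delta-s\epsilon)^i$ and observe that it is strictly decreasing in $\delta$, while $Q_n^u$ does not depend on $\delta$. You also make explicit the derivative bound on $y\ge 0$ and the role of $n>1$, which the paper leaves implicit.
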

\begin{proof}
    See appendix \ref{sec:VEintuitionAppendics}.
\end{proof}

This result is illustrated in Figure \ref{fig:differentnsfunc ofdelta}. Note that for $n=1$, ${\rm VE}(\delta)=1-\epsilon$ for all $\delta$.
\begin{figure}[H]
    \centering
\includegraphics[width=0.8\linewidth]{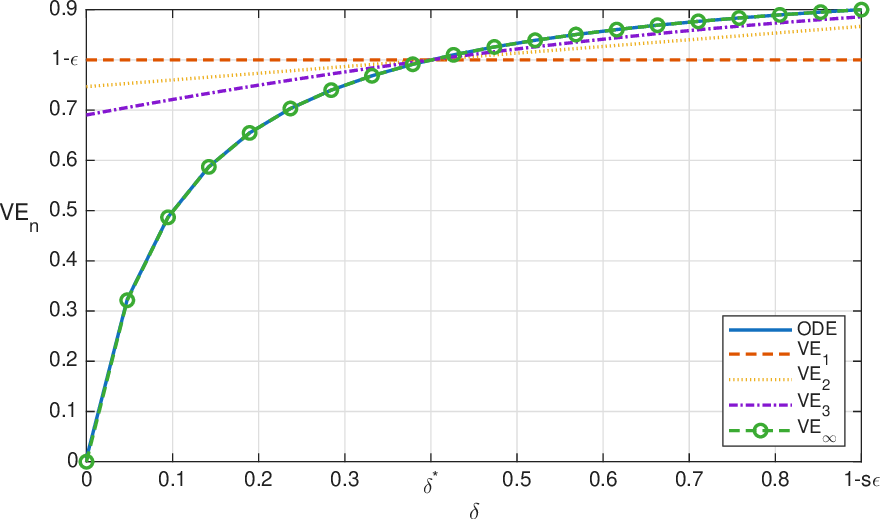}
    \caption[$\rm{VE}_n$  as a function of $\delta$, calculated for $n=1$, $n=2$, $n=3$, $n\to\infty$, and ${\rm VE}$ for high reproduction number]
    {$\rm{VE}_n$  as a function of $\delta$, calculated for $n=1$ (dashed red), $n=2$ (dotted yellow), $n=3$ (dash-dotted purple) where $\epsilon=0.2$, and $s=0.5$. The plot demonstrates that ${\rm VE}_n$ is monotone increasing in $\delta$. For $\delta<\delta^{*}$, ${\rm VE}_n(\delta)$ is monotone decreasing in $n$. For $\delta=\delta^{*}$, ${\rm VE}_n(\delta)$ is equal to ${\rm VE}_1=1-\epsilon$, and for $\delta>\delta^{*}$, ${\rm VE}_n(\delta)$ is monotone increasing in $n$.
    The green dashed curve with open circle markers represents ${\rm VE}_\infty$ \eqref{eq:VEnisinf}. The blue solid curve represents the ODE based calculation of ${\rm VE}$ for high reproduction number (${\mathcal R}_v\approx13$), as discussed in Section \ref{sec:VE}.} 
   \label{fig:differentnsfunc ofdelta}
\end{figure}
Proposition \ref{prop:VEanalyticMonotone} outlines the behavior of ${\rm VE}$ as a function of $\Lambda$ under various regimes of the boosting parameter $\delta$. The following proposition applies this framework to ${\rm VE}_n$, establishing parallel monotonicity: ${\rm VE}_n$ decreases monotonically with the number of exposures $n$ for $\delta<\delta^{*}$, and increases monotonically for $\delta>\delta^{*}$. 
We prove the following:
\begin{restatable}{proposition}{VEmmonotneinn}\label{prop:VEmmonotneinn}
    Let $n>1$. For $\delta<\delta^{*}$, we have ${\rm VE}_{n+1}<{\rm VE}_n$, and for $\delta>\delta^{*}$, ${\rm VE}_{n+1}>{\rm VE}_n$. In the case $\delta=\delta^{*}$, we get ${\rm VE}_n={\rm VE}_1$. 
\end{restatable}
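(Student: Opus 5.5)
The plan is to reduce the claim to an elementary comparison of two geometric sums. Write $a = 1-s$ and $b = 1-\delta-s\epsilon$ for the per-exposure probabilities of remaining susceptible (unvaccinated and vaccinated, respectively). Since $s\in(0,1]$ and $\delta\in[0,1-s\epsilon]$, we have $a,b\in[0,1)$. By \eqref{eq:VEncalc},
\[
{\rm VE}_n = 1-\frac{s\epsilon}{\delta+s\epsilon}\, f(n),\qquad f(n)=\frac{1-b^n}{1-a^n}.
\]
The prefactor $s\epsilon/(\delta+s\epsilon)$ is positive and independent of $n$. Hence ${\rm VE}_{n+1}<{\rm VE}_n$ if and only if $f(n+1)>f(n)$, and conversely. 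The key observation is that $\delta<\delta^*=(1-\epsilon)s$ is equivalent to $\delta+s\epsilon<s$, i.e.\ to $b>a$. Likewise $\delta>\delta^*$ is equivalent to $b<a$, and $\delta=\delta^*$ to $b=a$.

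The equality case is immediate. If $a=b$ then $f(n)\equiv 1$ and $\delta+s\epsilon=s$, so ${\rm VE}_n = 1-\epsilon={\rm VE}_1$ for every $n$.

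For the strict cases, I will factor $1-x^n=(1-x)\,g_x(n)$ with $g_x(n)=\sum_{k=0}^{n-1}x^k$. This gives $f(n)=\frac{1-b}{1-a}\,\frac{g_b(n)}{g_a(n)}$, so it suffices to compare the ratios $g_b(n)/g_a(n)$ and $g_b(n+1)/g_a(n+1)$. Using $g_x(n+1)=g_x(n)+x^n$ and cross-multiplying (all quantities are positive, since $g_x(n)\ge 1$), the inequality $g_b(n+1)g_a(n)>g_b(n)g_a(n+1)$ reduces to $b^n g_a(n)>a^n g_b(n)$. This in turn reads
\[
\sum_{k=0}^{n-1} a^k b^k\left(b^{\,n-k}-a^{\,n-k}\right)>0 .
\]

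When $b>a\ge 0$, every summand is nonnegative, since $b^{m}>a^{m}$ for $m\ge 1$. So $f$ is strictly increasing and ${\rm VE}_{n+1}<{\rm VE}_n$. When $a>b\ge0$ all signs reverse and we obtain ${\rm VE}_{n+1}>{\rm VE}_n$.

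The main point requiring care is strictness at the boundary values $a=0$ (i.e.\ $s=1$) or $b=0$ (i.e.\ $\delta=1-s\epsilon$). There the factors $a^kb^k$ vanish for $k\ge1$. However, the $k=0$ summand equals $b^n-a^n$, which is nonzero whenever $a\ne b$, so the inequality remains strict. The argument in fact works for all $n\ge1$, which covers the stated range $n>1$.
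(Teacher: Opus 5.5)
Your proof is correct. It uses the same reduction as the paper: factor $1-t^n=(1-t)\sum_{k=0}^{n-1}t^k$, pull out the positive $n$-independent constants, and reduce the claim to whether the ratio of the two geometric sums increases or decreases from $n$ to $n+1$.

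The routes part in how the key inequality is proved. The paper shows that $f_n(t)=\frac{1-t^{n+1}}{1-t^n}$ is increasing on $[0,1)$ by differentiation. It writes $f_n'(t)=t^{n-1}P_n(t)/(1-t^n)^2$ with $P_n(t)=t^{n+1}-(n+1)t+n$, and uses convexity of $P_n$ to get $P_n(t)\ge P_n(1)=0$. It then compares $f_n(x)$ with $f_n(y)$, where $x=1-\delta-s\epsilon$ and $y=1-s$.

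Your cross-multiplied inequality $b^n g_a(n)>a^n g_b(n)$ is equivalent to $b^n/g_b(n)>a^n/g_a(n)$. That is monotonicity of $t\mapsto t^n/g_t(n)=f_n(t)-1$, so you prove the same lemma. You do it by a direct two-point comparison through the termwise expansion $\sum_{k}a^kb^k(b^{n-k}-a^{n-k})$ rather than by calculus.

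Your version is entirely elementary and makes strictness explicit, including the boundary cases $a=0$ or $b=0$ through the $k=0$ term. The paper records only $P_n\ge 0$ and leaves strictness to the (correct) fact that the minimum of $P_n$ at $t=1$ is unique. The paper's version, in turn, packages the comparison as one monotone function of a single variable, which makes the threshold $x=y$, i.e.\ $\delta=\delta^*$, immediately visible.
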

\begin{proof}
    See appendix \ref{sec:VEintuitionAppendics}.
\end{proof}
This individual-level probabilistic framework perfectly recovers the critical threshold, $\delta^* =
(1-\epsilon)s$, that originally emerged from the complex continuous-time ODEs in Proposition \ref{prop:VEanalyticMonotone}. Similar to ${\rm VE}$, for $\delta=\delta^*$, ${\rm VE}_n$ is also constant and equal to the individual level vaccine efficacy regardless of the number of exposures. 

This result makes the dynamics of the threshold mathematically transparent, noting that $\delta=\delta^{*}=(1-\epsilon)s$ corresponds to the case where $q_s=1$ and $q_w=0$. In this scenario, subsequent exposures have no effect: upon their initial exposure, an individual either becomes infected or acquires permanent immunity. Therefore, ${\rm VE}_n={\rm VE}_1$ remains constant. The same effective behavior holds for all systems with $\delta=\delta^{*}$.

This individual-level analysis can suggest an intuitive explanation to the vaccine effectiveness paradox. In the case of net-beneficial vaccine, higher vaccine coverage reduces the reproductive number, see \eqref{eq:net_beneficial}, which in turn means individuals experience fewer pathogen exposures (smaller $n$). Therefore, under strong boosting conditions, $\delta > \delta^*$, where effectiveness increases with more exposures, a higher vaccine coverage, which reduces exposures, paradoxically leads to a lower ${\rm VE}$. 
\subsection{Asymptotic Behavior of ${\rm VE}_n$}
In the limit $n \to \infty$, the probabilities of infection approach
\begin{equation}
    \lim_{n\to\infty}Q_{n}^{u}=1,\qquad 
    \lim_{n\to\infty}Q_{n}^{v}=\frac{s\epsilon}{\delta+s\epsilon},
\end{equation}
see~\eqref{eq:Qnudirect} and~\eqref{eq:Qnvdirect}.  
Consequently, the asymptotic vaccine effectiveness is:
\begin{equation}\label{eq:VEnisinf}
    {\rm VE}_{\infty}=\lim_{n\to\infty}{\rm VE}_n=1-\frac{s\epsilon}{\delta+s\epsilon}.
\end{equation}
These results recover the asymptotic attack rates,~\eqref{eq:ARlargeRv}, and the asymptotic~${\rm VE}$, ~\eqref{eq:largeRv}, arising from the corresponding ODE-based formulation.
Figure \ref{fig:differentnsfunc ofdelta} presents ${\rm VE}_n(\delta)$ for various values of $n$ and illustrates these results.

Furthermore, because ${\rm VE}_n$ is monotonic with respect to the number of exposures $n$, we can establish strict bounds on vaccine effectiveness,
\begin{equation}\label{eq:VEnbounds}
\begin{split}
    & \forall \delta<\delta^{*}: {\rm VE}_{\infty} \leq{\rm VE}_n\leq {\rm VE}_1,\\
    & \forall\delta>\delta^{*}: {\rm VE}_{1} \leq{\rm VE}_n\leq {\rm VE}_\infty.
\end{split}
\end{equation}
These bounds align with those established analytically for population-level ${\rm VE}$, in Proposition \ref{prop:VEanalyticMonotone}.

Shifting to this individual-level probabilistic framework yields three core insights into the counterintuitive behavior of vaccine effectiveness (${\rm VE}$): First, it explains the ${\rm VE}$ paradox. Broader vaccine coverage successfully reduces the overall frequency of pathogen exposures across the population. However, in a frequent boosting regime ($\delta > \delta^*$), this reduction in exposure frequency paradoxically drives down the relative effectiveness of the vaccine.
Second, the model perfectly recovers the critical threshold at $\delta = \delta^*$, demonstrating that when exposure leads to either infection or permanent immunity, ${\rm VE}$ remains static and equal to individual-level vaccine efficacy ($1-\epsilon$).

The individual-level probabilistic framework recovers the asymptotic behavior of ${\rm VE}$  for highly transmissible pathogens. At high reproductive numbers, ${\rm VE}$  converges to a limit corresponding to infinitely many exposures, which is inherently independent of vaccine coverage.
To understand why the framework successfully predicts asymptotic outcomes at high reproductive numbers, we must evaluate the three competing effects captured by the full ODE model:
\begin{enumerate}[label=\Roman*., align=left, labelwidth=2em, labelsep=0.5em, leftmargin=3em]
\item \textbf{Multiple exposures.}
\item \textbf{Reduced transmission:} $\mathcal{R}_v$ drops as vaccine coverage increases.
\item \textbf{Epidemic slowdown:} Susceptibles are depleted as recovered individuals accumulate.
\end{enumerate}

Our findings confirm that under high transmission, multiple pathogen exposures completely dominate the other two effects. While broader vaccine coverage does reduce transmission, the reproduction number remains large enough that the asymptotic limits still hold. Furthermore, as $\mathcal{R}_v \to \infty$, the accumulation of recovered individuals happens so quickly that nearly everyone is infected or boosted instantaneously, so that epidemic slowdown has no dynamic effect. Because the probabilistic framework exclusively isolates this dominant exposure mechanism, it sidesteps coverage-dependent transmission dynamics entirely, successfully explaining why ${\rm VE}$ ultimately decouples from vaccine coverage at the high transmission limit.

\section{Discussion}\label{sec:discussion}

In this study, we established a mathematical framework to evaluate the population-level impact of immune boosting during a single infectious disease outbreak \cite{lavine2011natural, park2023immune}. Ultimately, our analysis provides a fundamental approach to better understand the complex, emergent behavior of population-level vaccine effectiveness, ${\rm VE}$. By extending the classical SIR model to incorporate a mechanism of perfect boosting, triggered by both regular and weak pathogen exposures, we derived closed-form final size relations that explicitly connect attack rates to the boosting parameter $\delta$. Central to our findings is the identification of a critical boosting threshold, $\delta^* = (1-\epsilon)s$, which strictly governs the system's dynamics. Below this threshold, relative ${\rm VE}$ behaves conventionally. At the threshold, ${\rm VE}$ exactly matches individual-level vaccine efficacy, and above the threshold, we observe a regime where relative ${\rm VE}$ increases with the force of exposure but paradoxically decreases with vaccine coverage.

Our theoretical findings offer essential context for the broader epidemiological literature dedicated to estimating vaccine effectiveness and individual vaccine efficacy \cite{halloran1992interpretation,halloran1991direct, halloran1999design, shim2012distinguishing,smith1984assessment}. In classical mathematical frameworks, vaccine effectiveness is typically expected to decline as the total scale of the outbreak increases, driven by the differential depletion of susceptible, a mechanism often described as the leaky vaccine effect or survivor bias \cite{kahn2024examining,lipsitch2019depletion,  nikas2023competing,tokars2020waning}. Indeed, our model analytically recovers this expected monotonic decline in the infrequent boosting regime ($\delta < \delta^*$). However, we demonstrate that under frequent immune boosting ($\delta > \delta^*$), the depletion pattern inverts, producing a counterintuitive trajectory where relative vaccine effectiveness increases with the cumulative force of exposure. Consequently, our findings expand traditional frameworks by showing that relative protection is a context-dependent metric shaped by the accumulation of silent pathogen exposures. Therefore, in the presence of immune boosting, the standard statistical techniques used to infer vaccine efficacy from observational data warrant careful reexamination.

Our findings suggest that public health surveillance capable of tracking immune boosting, such as through serial serological surveys and contact tracing, could help clarify the interpretation of vaccine effectiveness metrics in the presence of immune boosting. Without such adjustments, there is a potential for data misinterpretation, which we characterize as the vaccine effectiveness paradox: a population with higher vaccine coverage can paradoxically exhibit lower relative vaccine effectiveness than a comparable population with lower coverage. Consequently, a lower relative protection metric in a highly vaccinated community might be falsely attributed to vaccine failure or variant escape, rather than the epidemiological success of reducing the natural ``boosters" required to maintain a high relative advantage. To avoid this pitfall, evaluating population-level impact requires moving away from a strict reliance on relative individual risk measurements. The true success of a vaccination campaign is captured by the overall vaccine effectiveness, ${\rm VE}_{\rm overall}$, which measures the absolute reduction in the epidemic burden. Because ${\rm VE}_{\rm overall}$ is a counterfactual metric that cannot be directly observed within a single population, we point to several potential avenues for future empirical research that could help protect observational studies from this bias. First, future investigations might explore complementing traditional relative metrics (such as hazard ratios) with absolute measures of disease burden, including incidence rate differences and the absolute risk reductions between cohorts. Second, public health frameworks could evaluate the feasibility of cross-jurisdictional or multi-center cohort designs to compare absolute incidence rates across distinct regions with varying levels of vaccine coverage. Finally, rather than relying on binary seropositivity assays, future surveillance programs might benefit from exploring quantitative serological tracking, monitoring shifts in pre- and post-outbreak antibody titer distributions, to infer the background rate of subclinical, immunity-reinforcing exposure events across different coverage regimes.

Methodologically, our approach underscores the advantages of utilizing the cumulative force of exposure, $\Lambda$, as the natural intrinsic coordinate of an epidemic. In particular, formulating the system with respect to $\Lambda$ substantially simplifies the derivation of final size equations. By shifting analytical focus to the cumulative force of exposure, we were able to analyze vaccine effectiveness while bypassing the complexities involved in the solution of Equation~\eqref{eq:finalsizelambda}.

While our model provides a foundational understanding of perfect boosting during short-scale outbreaks, it inherently relies on specific assumptions that suggest avenues for future extensions. We assumed that boosting confers full and permanent immunity, yet biological reality often involves partial boosting, yielding higher but incomplete protection against infection upon subsequent exposure \cite{barbarossa2015mathematical}. Furthermore, expanding the framework to account for asymptomatic individuals acting as ``semi-boosters" with reduced infectivity could provide a more nuanced picture of disease transmission. It would also be valuable to explore a model in which unvaccinated individuals can gain immune protection following exposure which does not lead to infection.
Finally, while appropriate for single outbreaks, adapting this model to long-term endemic scenarios will require the re-introduction of waning immunity and demographic turnover, allowing for a comprehensive analysis of multi-wave epidemic trajectories. Looking forward, such extended frameworks will be valuable for evaluating next generation technologies, such as endogenous `self-boosting' vaccines \cite{arinaminpathy2012self}, and devising an optimized public health strategy to suppress overall disease burden over prolonged periods of time.

\bmhead{Acknowledgements}

This research was supported by the Israel Science Foundation (grant no. 3730/20) within the KillCorona-Curbing Coronavirus Research Program, and by the Israel Science Foundation (ISF) grant 1596/23.

\begin{appendices}
\section{Derivation of Post-Vaccination Basic Reproduction Number}\label{secA:Rv}
The model describes cases for which vaccination is administrated before the epidemic starts. To compute the basic reproduction number after vaccination, $\mathcal{R}_v$, we follow
the standard computation of the next generation matrix \cite{martcheva2015introduction}. We arrange the equations for the infected compartments $I_u, I_v$ in the following way:  
\begin{equation}\label{eq:Icompartments}
\begin{split}
     &I_u^\prime=\underbrace{s\lambda(t)S_u}_{\mathcal{F}_1}-\underbrace{\gamma_uI_u}_{\mathcal{V}_1},\\
&I_v^\prime=\underbrace{s\epsilon\lambda(t)S_v}_{\mathcal{F}_2}-\underbrace{\gamma_vI_v}_{\mathcal{V}_2},\\
\end{split}
\end{equation}
where $\mathcal{F}_i$ is the appearance rate of new infections in the vaccinated and unvaccinated compartments, and $\mathcal{V}_i$
incorporates
the remaining transitional terms: disease progression and recovery.
We linearize the system of infected compartments, (\ref{eq:Icompartments}), about
the disease-free equilibrium where the entire  population is susceptible, and obtain:
\begin{equation}
    x^\prime=(F-V)x,\quad x=(I_u, I_v)^T,
\end{equation}
where
\begin{align}
    F=\begin{pmatrix}
        s\beta_u(1-\phi)& s\beta_v(1-\phi)\\ s\epsilon \beta_u  \phi& s\epsilon\beta_v  \phi
    \end{pmatrix},\quad  V=\begin{pmatrix}
        \gamma_u& 0\\ 0 & \gamma_v
    \end{pmatrix}.
\end{align}

The post vaccination basic reproduction number is the spectral radius of the next-generation matrix:
$\mathcal{R}_v=\rho (FV^{-1}_{t=0})$, yielding \eqref{eq:Rv}.

\section{Proof of Proposition \ref{prop:finalsize}}\label{sec:FinalsizeappLambda}

Integrating equations \eqref{eq:ru}, \eqref{eq:rv} and substituting in definition \eqref{eq:ARdef}, we obtain:
\begin{equation}\label{eq:ARfromdef}
    AR_u= \frac{\gamma_u \int_{0}^{\infty} {I_u(t)\,dt}}{S_u(0)},\quad AR_v=\frac{\gamma_v \int_{0}^{\infty} {I_v(t)\,dt}}{S_v(0)}.
\end{equation}
From~\eqref{eq:Su} we get: 
\begin{equation}\label{eq:Su(infty)with Lambda}
S_u(\infty)=S_u(0)\exp{\left(-s\int_{0}^{\infty} \lambda(t)\,dt\right)} = S_u(0)\exp{(-s\Lambda)}.
\end{equation}
From~\eqref{eq:Su} and~\eqref{eq:Iu} we get: 
\[S_u^\prime(t)+I_u^\prime(t)=-\gamma_uI_u(t).\]
Integration yields:
\begin{equation*}
    S_u(\infty)-S_u(0)+\bcancel{I_u(\infty)}-\bcancel{I_u(0)}=-\gamma_u\int_{0}^{\infty} \ \hspace{-1em} I_u(t)\,dt,
\end{equation*}
so that at the limit $I_u(0)\to 0$ and since $I_u(\infty)=0$,
\begin{equation}\label{eq:gammauintIu}
    \gamma_u\int_{0}^{\infty}\ \hspace{-1em}I_u(t)\,dt=S_u(0)-S_u(\infty).
\end{equation}
Therefore, by substituting \eqref{eq:gammauintIu} and \eqref{eq:Su(infty)with Lambda} into \eqref{eq:ARfromdef}, the attack rate in the unvaccinated population is obtained by \eqref{eq:finalsizeULambda}.

Similarly, to calculate $AR_v$, from \eqref{eq:Sv}:
\begin{equation}\label{eq:Sv(infty)withLambda}
     S_v(\infty)=S_v(0)\exp(-(\delta+s\epsilon) \Lambda).
\end{equation}
From \eqref{eq:Sv} and \eqref{eq:Iv} we obtain
\[(\delta+s\epsilon) I_v^\prime+s\epsilon S_v^\prime=-(\delta+s\epsilon)\gamma_vI_v,  \]
so that at the limit $I_v(0)\to 0$ and since $I_v(\infty)=0$, integration yields:
\begin{equation}\label{eq:intgammavIv}
   \gamma_v\int_{0}^{\infty} \ \hspace{-1em} I_v(t)\,dt=\frac{s\epsilon}{\delta+s\epsilon}(S_v(0)-S_v(\infty)). 
\end{equation}
Substituting \eqref{eq:intgammavIv} and \eqref{eq:Sv(infty)withLambda} into~\eqref{eq:ARfromdef} yields \eqref{eq:finalsizeVLambda}.

Now, by substituting \eqref{eq:gammauintIu} and \eqref{eq:intgammavIv} into definitions \eqref{eq:Lambdadef} and \eqref{eq:lambda}, we obtain equation \eqref{eq:finalsizelambda} for $\Lambda$.

\section{Proof of Proposition \ref{prop:existenceLambda}}
\label{sec:UninqueLamdaapp}
Define $F(\Lambda)$ as:
\begin{equation}\label{eq:defoff}
    F(\Lambda)= \frac{\beta_u(1-\phi)}{\gamma_u}(1-\exp{(-s\Lambda)})+\frac{\beta_v\phi s\epsilon}{(\delta+s\epsilon)\gamma_v}\left(1-\exp{(-(\delta+s\epsilon)\Lambda)}\right)-\Lambda.
\end{equation}
Finding a solution to Equation \eqref{eq:finalsizelambda} is equivalent to finding a root of the function $F(\Lambda)$.
We aim to prove that $F(\Lambda) = 0$ has a unique solution $\Lambda^* > 0$ if and only if $\mathcal{R}_v > 1$.
First, observe that $F(0) = 0$. This confirms that $\Lambda = 0$ is always a solution, corresponding to the disease-free equilibrium of the system.

Next, we prove that $F(\Lambda)$ is strictly concave. The first derivative of $F(\Lambda)$ with respect to $\Lambda$ is:
\begin{equation*}
    F'(\Lambda) = \frac{\beta_u s(1-\phi)}{\gamma_u} e^{-s\Lambda} + \frac{\beta_v \phi s \epsilon}{\gamma_v} e^{-(\delta+s\epsilon)\Lambda}-1.
\end{equation*}

The second derivative of $F(\Lambda)$ is:
\begin{equation*}
    F''(\Lambda) = -\frac{\beta_u s^2(1-\phi)}{\gamma_u} e^{-s\Lambda} - \frac{\beta_v \phi s \epsilon (\delta+s\epsilon)}{\gamma_v} e^{-(\delta+s\epsilon)\Lambda}.
\end{equation*}
Given the positivity of the parameters, $F''(\Lambda) < 0$ for all $\Lambda \geq 0$. Therefore, $F(\Lambda)$ is strictly concave.

At the origin, the slope of $F$ is given by:
\begin{equation*}
    F'(0) = \mathcal{R}_v - 1.
\end{equation*} 
If $\mathcal{R}_v\leq1$, $F'(0) \leq 0$. Since $F(\Lambda)$ is strictly concave, the slope $F'(\Lambda)$ is strictly negative for all $\Lambda > 0$. Therefore, $F(\Lambda)$ strictly decreases away from $0$ and never intersects the horizontal axis again. The trivial solution $\Lambda = 0$ is the unique non-negative solution.

If $R_v > 1$, then $F'(0) > 0$. This implies that $F(\Lambda)$ initially increases, and $F(\Lambda) > 0$ for small $\Lambda > 0$. Notice that  $\lim_{\Lambda \to \infty} F(\Lambda) = -\infty$. Therefore, by the Intermediate Value Theorem, $F(\Lambda)$ has at least one positive root. Finally, the strict concavity of $F(\Lambda)$ ensures it can intersect the horizontal axis at most once for $\Lambda > 0$, guaranteeing that the positive solution $\Lambda^*$ is unique.

\section{Proof of Proposition \ref{prop:VEanalyticMonotone}}\label{sec:VEproofAPP}
For the case where $\delta =\delta^*$, substituting directly into Equation \eqref{eq:VELambda} yields ${\rm VE}(\Lambda)=1-\epsilon$ for all $\Lambda>0$.

For the case $\delta\neq \delta^*$, let 
\begin{equation*}
    R(\Lambda) = \frac{1 - \exp(-(\delta+s\epsilon)\Lambda)}{1 - \exp(-s\Lambda)}.
\end{equation*}
Since ${\rm VE}(\Lambda) = 1 - C \cdot R(\Lambda)$, where $C = \frac{s\epsilon}{\delta+s\epsilon}$ is a positive constant, the monotonicity of ${\rm VE}$ is exactly the opposite of the monotonicity of $R(\Lambda)$. Differentiating $R(\Lambda)$ with respect to $\Lambda$ gives:
\begin{equation*}
    R^\prime(\Lambda) = \frac{\exp(-(\delta+s\epsilon+s)\Lambda) \left[ (\delta+s\epsilon) e^{s\Lambda} - (\delta+s\epsilon) - s \exp((\delta+s\epsilon)\Lambda) + s \right]}{(1 - e^{-s\Lambda})^2}.
\end{equation*}

The denominator is always positive for $\Lambda > 0$, therefore the sign of $R'(\Lambda)$ is determined by the sign of the function:$$g(\Lambda) = (\delta+s\epsilon)(\exp(s\Lambda) - 1) - (\exp((\delta+s\epsilon)\Lambda) - 1)s.$$ Note that $g(0) = 0$. Differentiating $g(\Lambda)$ with respect to $\Lambda$ yields:
\begin{equation*}
    g'(\Lambda) = (\delta+s\epsilon)s \left[\exp((\delta^*+s\epsilon)\Lambda) - \exp((\delta+s\epsilon)\Lambda) \right].
\end{equation*}
If $\delta < \delta^*$,  then $g'(\Lambda) > 0$ for all $\Lambda > 0$,  which implies $R'(\Lambda) > 0$.
 However, if $\delta >  \delta^*$, then $g'(\Lambda) < 0$, meaning $R'(\Lambda) < 0$.
 Therefore, we have established that ${\rm VE}(\Lambda)$ is strictly decreasing when $\delta <  \delta^*$, and strictly increasing when $\delta >  \delta^*$. 
 
 Finally, to determine the bounds, we evaluate the limit of ${\rm VE}(\Lambda)$ as $\Lambda \to 0$. Applying L'Hôpital's rule, we obtain:
 \begin{equation*}
     \lim_{\Lambda\to 0} {\rm VE}= 1-\epsilon.
 \end{equation*}
Since ${\rm VE}(\Lambda)$ is strictly decreasing for $\delta < \delta^*$, it follows that ${\rm VE}(\Lambda)<1-\epsilon$ for all $\Lambda > 0$. Similarly, for $\delta > \delta^*$, ${\rm VE}(\Lambda)$ is strictly increasing, yielding ${\rm VE}(\Lambda)>1-\epsilon$. This completes the proof.

\section{Proof of Proposition \ref{prop:vaccination_boosting}}\label{sec:VEparadoxappendix}
By Proposition \ref{prop:existenceLambda}, $\Lambda^*$ is the unique positive root of the strictly concave function $F(\Lambda, \phi) = 0$. 

We apply the Implicit Function Theorem to $F$:
\begin{equation} \label{eq:implicit_F}
    \frac{\partial \Lambda^*}{\partial \phi} = -\frac{\frac{\partial F}{\partial \phi}}{\frac{\partial F}{\partial \Lambda}}.
\end{equation}

From Proposition \ref{prop:existenceLambda}, \begin{equation} \label{eq:F_lambda_neg}
    \frac{\partial F}{\partial \Lambda} \bigg|_{\Lambda=\Lambda^*} < 0.
\end{equation}

Consequently, the sign of $\frac{\partial \Lambda^*}{\partial \phi}$ in Equation \eqref{eq:implicit_F} is determined by the sign of $\frac{\partial F}{\partial \phi}$,
\begin{equation}\label{eq:dfdphi}
    \frac{\partial F}{\partial \phi} =  \underbrace{\frac{\beta_v s\epsilon}{(\delta + s\epsilon)\gamma_v} \left(1 - e^{-(\delta + s\epsilon)\Lambda^*}\right)}_{V(\Lambda^*)} - \underbrace{\frac{\beta_u}{\gamma_u} \left(1 - e^{-s\Lambda^*}\right)}_{U(\Lambda^*)}.
\end{equation}

We can rewrite $U(\Lambda^*)$ and $V(\Lambda^*)$ in terms of $h(x)$:
\begin{equation}  
    U(\Lambda^*) = \frac{\beta_u}{\gamma_u} s \cdot h(s) , \quad V(\Lambda^*) = \frac{\beta_v}{\gamma_v} s\epsilon \cdot h(\delta + s\epsilon),
\end{equation}
where $h(x) := \frac{1 - e^{-x\Lambda^*}}{x}$. Its derivative is:
\begin{equation}
    h'(x) = \frac{e^{-x\Lambda^*}(1 + x\Lambda^*) - 1}{x^2}.
\end{equation}
Since $e^y > 1 + y$ for all $y > 0$, it follows that $1 > e^{-y}(1 + y)$, making the numerator strictly negative. Therefore, $h'(x) < 0$, establishing that $h(x)$ is strictly decreasing for $x > 0$.

For the high boosting condition, $\delta > \delta^*$, it follows that $\delta + s\epsilon > s$. Therefore, because $h(x)$ is strictly decreasing, $h(\delta + s\epsilon) < h(s)$. Given \eqref{eq:vaccineassumption} and $\epsilon \in (0,1)$,
\begin{equation}
    V(\Lambda^*)= \frac{\beta_v}{\gamma_v} s\epsilon \cdot h(\delta + s\epsilon) < \frac{\beta_u}{\gamma_u} s \cdot h(\delta + s\epsilon) < \frac{\beta_u}{\gamma_u} s \cdot h(s) = U(\Lambda^*).
\end{equation}
Since $V(\Lambda^*) < U(\Lambda^*)$, it follows that $\frac{\partial F}{\partial \phi} < 0$ for $\delta>\delta^*$. Note that in the frequent boosting regime, the proof still holds for a net-beneficial vaccine \eqref{eq:net_beneficial}.

If $\delta < \delta^*$, it follows that $\delta + s\epsilon < s$. Assuming $\frac{\beta_v}{\gamma_v}\leq \frac{\beta_u}{\gamma_u}$, and given that $\frac{s\epsilon}{\delta+s\epsilon}<1$ in the presence of boosting ($\delta>0$), Equation \eqref{eq:dfdphi} implies that $\frac{\partial F}{\partial \phi}<0$.
This completes the proof.

\section{Proof of Proposition \ref{prop:VEoevrall}}\label{sec:VEoverallApp}

First, note that $AR_u$ and $AR_v$ strictly increase with $\Lambda$ by Equations \eqref{eq:finalsizeULambda} and \eqref{eq:finalsizeVLambda}, respectively. Therefore, by Equation \eqref{eq:Z}, the final epidemic size, $Z$, also strictly increases with $\Lambda$. By Proposition \ref{prop:vaccination_boosting}, if \eqref{eq:vaccineassumption} holds, $\Lambda'(\phi) < 0$ for all boosting parameters $\delta > 0$. Consequently, $Z$ is a strictly decreasing function of $\phi$, meaning $Z'(\phi) < 0$. Differentiating ${\rm VE}_{\rm overall}$ with respect to $\phi$ yields
\begin{equation*}
    {\rm VE}_{\rm overall}'(\phi) = -\frac{Z'(\phi)}{Z(0)}.
\end{equation*}
Since $Z(0) > 0$ and $Z'(\phi) < 0$, it follows that ${\rm VE}_{\rm overall}'(\phi) > 0$, and thus ${\rm VE}_{\rm overall}$ increases with the vaccine coverage, $\phi$.
\section{Numerical Simulations of Net-Beneficial Vaccines}\label{sec:appendixnumericalsimulations}
We consider cases involving risk compensation, where $\frac{\beta_v}{\gamma_v}>\frac{\beta_u}{\gamma_u}$, while ensuring the vaccine remains net-beneficial \eqref{eq:net_beneficial}. We demonstrate numerically that in the infrequent boosting regime ($\delta < \delta^*$), the behavior of ${\rm VE}$ with respect to vaccine coverage $\phi$ varies: it can either decrease and then increase (Figure \ref{fig:MastercodeApp}A), or strictly decrease (Figure \ref{fig:MastercodeApp}B). This contrasts with the baseline scenario, $\beta_u = \beta_v$, shown in Figure \ref{fig:MastercodeVE}B, where ${\rm VE}$ strictly increases with $\phi$ in the infrequent boosting regime.

\begin{figure}[!htb]
    \centering
    \includegraphics[width=.99\linewidth]{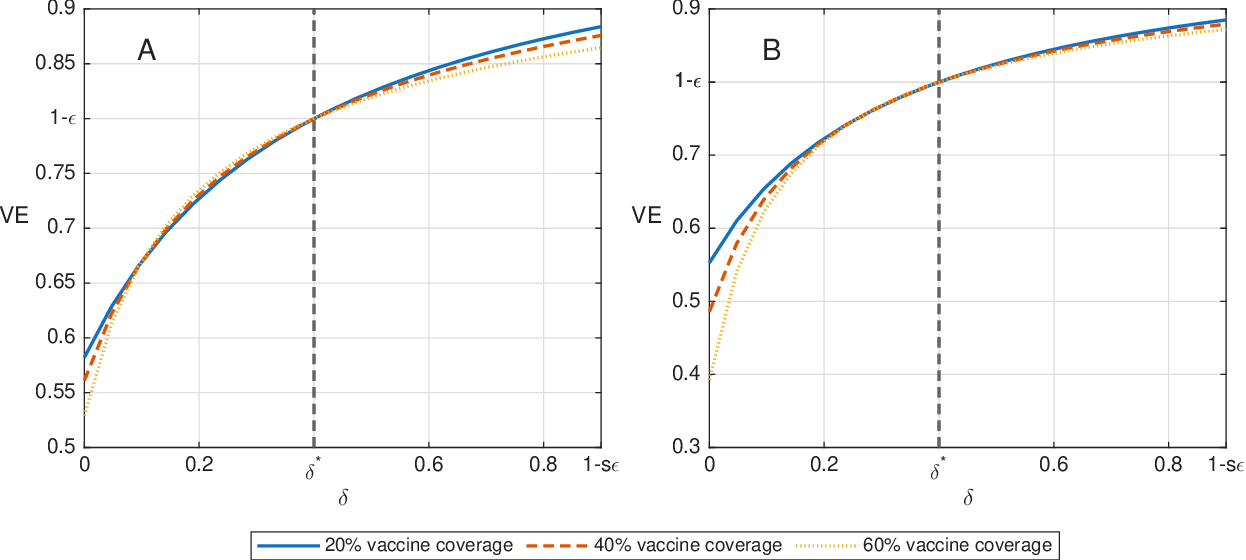}
    \caption[Vaccine effectiveness calculated as a function of $\delta$ for different values of vaccine coverage]
    { Vaccine effectiveness (${\rm VE}$) calculated as a function of the boosting parameter $\delta$ across three vaccine coverage levels, $\phi = 0.2, 0.4, 0.6$. Common parameters for both panels are $\epsilon = 0.2$, $\beta_u = 5$, $\gamma_u = \gamma_v = 1$, and $s = 0.5$. {\bf A.} For $\beta_v = 15$, in the regime where $\delta < \delta^*$, ${\rm VE}$ initially decreases with vaccine coverage and then increases. {\bf B.} For $\beta_v = 20$, in the same $\delta < \delta^*$ regime, ${\rm VE}$ strictly decreases as vaccine coverage increases.}
    \label{fig:MastercodeApp}
\end{figure}

\section{Proof of Proposition \ref{prop:ARasymptotitcLambda}}\label{sec:highRvLambdaApp}

    We prove that when $\beta_u\to\infty$ or $\beta_v\to\infty$ the positive solution $\Lambda$ of equation \eqref{eq:finalsizelambda} is bounded below by a positive constant. This implies that when $\beta_u\to\infty$ or $\beta_v\to\infty$, $\Lambda\to\infty$.  
   \newtheorem{lemma}{Lemma}
\begin{lemma}

\label{lemma:lowerboundLambda}
    If 
    \begin{equation}\label{eq:conditionbeta}
        \frac{\beta_u(1-\phi)}{\gamma_u}\left(1 - e^{-\frac{1}{2}s}\right) + \frac{\beta_v \phi s \epsilon}{(\delta+s\epsilon) \gamma_v}\left(1 - e^{-\frac{1}{2}(\delta+s\epsilon)}\right)-\frac{1}{2}>0,
    \end{equation} and $\mathcal{R}_v>1$,
     then the solution $\Lambda>0$ of Equation \eqref{eq:finalsizelambda} satisfies $\Lambda > \frac{1}{2}$.

\end{lemma}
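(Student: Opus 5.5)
The proof rests on the concavity argument from the proof of Proposition~\ref{prop:existenceLambda}. Let $F(\Lambda)$ be the function defined in \eqref{eq:defoff}, so that the positive solution $\Lambda>0$ of \eqref{eq:finalsizelambda} is exactly the unique positive root of $F$. The hypothesis \eqref{eq:conditionbeta} says precisely that $F(\tfrac12)>0$. The task is therefore to show that a strictly concave function with $F(0)=0$, $F'(0)>0$, and $F(\tfrac12)>0$ has its unique positive root to the right of $\tfrac12$.

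The steps are as follows. First, recall from Appendix~\ref{sec:UninqueLamdaapp} the facts already established there:
\begin{itemize}
\item $F(0)=0$;
\item $F''<0$ on $[0,\infty)$;
\item $F'(0)=\mathcal{R}_v-1>0$;
\item $F(\Lambda)\to-\infty$ as $\Lambda\to\infty$, so a unique root $\Lambda^*>0$ exists.
\end{itemize}

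Second, show that $F>0$ on $(0,\Lambda^*)$. Suppose instead that $F(\Lambda_1)\le 0$ for some $\Lambda_1\in(0,\Lambda^*)$. Since $F>0$ just to the right of $0$ (because $F'(0)>0$), the intermediate value theorem would produce a root in $(0,\Lambda_1]$. This contradicts uniqueness. Alternatively, concavity gives directly $F(t\Lambda^*)\ge tF(\Lambda^*)+(1-t)F(0)=0$ for $t\in[0,1]$, with strict inequality in the interior by strict concavity.

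Third, show that $F<0$ on $(\Lambda^*,\infty)$. By concavity, for $\Lambda>\Lambda^*$ the chord through $0$ and $\Lambda^*$ lies above the graph, so $F(\Lambda)\le \frac{\Lambda}{\Lambda^*}F(\Lambda^*)=0$, with strict inequality by strict concavity.

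Finally, since $F(\tfrac12)>0$ and $\tfrac12\neq\Lambda^*$, the third step rules out $\tfrac12>\Lambda^*$. Hence $\tfrac12<\Lambda^*$, which is the claim.

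I do not expect a real obstacle. The only point needing care is the sign of $F$ beyond the root, and the chord argument from strict concavity settles it cleanly. Note that the hypothesis $\mathcal{R}_v>1$ is used only to guarantee that a positive solution exists, via Proposition~\ref{prop:existenceLambda}. Indeed, \eqref{eq:conditionbeta} together with concavity already forces $F'(0)>0$.
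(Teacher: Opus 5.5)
Your proposal is correct and follows the paper's argument: $F(0)=0$, $F(\tfrac12)>0$ by \eqref{eq:conditionbeta}, and strict concavity of $F$ force the unique positive root beyond $\tfrac12$. Your chord argument for $F<0$ on $(\Lambda^*,\infty)$ spells out the step the paper states in one line, and your side remark that \eqref{eq:conditionbeta} together with concavity already implies $\mathcal{R}_v>1$ is also correct.
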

\begin{proof}[Proof of Lemma \ref{lemma:lowerboundLambda}]
Assume condition \eqref{eq:conditionbeta} holds, and $\mathcal{R}_v>1$, and let $\Lambda^{*}$ denote the unique positive solution to Equation \eqref{eq:finalsizelambda}, as guaranteed by Proposition \ref{prop:existenceLambda}. By definition, $\Lambda^*$ is a root of the function $F(\Lambda)$, as defined in Equation \eqref{eq:defoff}, such that $F(\Lambda^*) = 0$.

By \eqref{eq:conditionbeta}, $F$ at $\Lambda = 1/2$ satisfies:
\begin{equation}
   F\left( \frac{1}{2} \right)= \frac{\beta_u(1-\phi)}{\gamma_u}\left(1 - e^{-\frac{1}{2}s}\right) + \frac{\beta_v \phi s \epsilon}{(\delta+s\epsilon) \gamma_v}\left(1 - e^{-\frac{1}{2}(\delta+s\epsilon)}\right)-\frac{1}{2}>0. \label{eq:F(Lambda)_eval}
\end{equation}

Because $F(0) = 0$, $F(1/2) > 0$, and $F$ is strictly concave, the function cannot cross the horizontal axis before $\Lambda = 1/2$. Therefore, the unique positive root must satisfy $\Lambda^* > 1/2$.
\end{proof}
\begin{proof}[Proof of Proposition \ref{prop:ARasymptotitcLambda}]
    Now, from Equation \eqref{eq:finalsizelambda} and Lemma \ref{lemma:lowerboundLambda},
\begin{equation}\label{eq:limLambda}
    \lim_{\beta_u\to \infty} \Lambda\geq \lim_{\beta_u \to \infty} \frac{\beta_u(1-\phi)}{\gamma_u}\left(1 - e^{-\frac{1}{2}s}\right) + \frac{\beta_v \phi s \epsilon}{(\delta+s\epsilon) \gamma_v}\left(1 - e^{-\frac{1}{2}(\delta+s\epsilon)}\right) = \infty,
\end{equation}
and similarly,
\begin{equation}\label{eq:limLambdaV}
    \lim_{\beta_v\to \infty} \Lambda\geq \lim_{\beta_v \to \infty} \frac{\beta_u(1-\phi)}{\gamma_u}\left(1 - e^{-\frac{1}{2}s}\right) + \frac{\beta_v \phi s \epsilon}{(\delta+s\epsilon) \gamma_v}\left(1 - e^{-\frac{1}{2}(\delta+s\epsilon)}\right) = \infty.
\end{equation}
Therefore, from Equations \eqref{eq:finalsizeULambda}, \eqref{eq:finalsizeVLambda}, \eqref{eq:finalsizelambda}, \eqref{eq:limLambda}, \eqref{eq:limLambdaV}, for $\phi>0$, we conclude:
\begin{equation}\label{eq:ARlargeRvApp}
    \begin{split}
     &\lim_{\beta_u\to \infty} AR_u=\lim_{\beta_v\to \infty} AR_u=\lim_{\Lambda\to \infty} AR_u=1,\\
     &\lim_{\beta_u\to \infty} AR_v=\lim_{\beta_v\to \infty} AR_v=\lim_{\Lambda\to \infty} AR_v=\frac{s\epsilon}{\delta+s\epsilon}.
    \end{split}  
\end{equation}
\end{proof}

\section{Proofs of Propositions \ref{prop:VEmmonotnedelta} and \ref{prop:VEmmonotneinn}}\label{sec:VEintuitionAppendics}
\VEmmonotnedelta*

\begin{proof}

 By Equation \eqref{eq:VEncalc}, this is a direct consequence of the fact that the term $\frac{1-(1-\delta-s\epsilon)^n}{\delta+s\epsilon} = \sum_{i=0}^{n-1} (1-\delta-s\epsilon)^i$  is strictly decreasing in $\delta$.
\end{proof}
\VEmmonotneinn*

\begin{proof}
By \eqref{eq:Qnudirect}, \eqref{eq:Qnvdirect} we get:
    \begin{equation}\label{eq:qnv/qnu}
        \frac{Q_n^{v}}{Q_n^{u}}=\frac{(1-(1-\delta-s\epsilon)^{n})s\epsilon}{(1-(1-s)^{n})(\delta+s\epsilon)}.
    \end{equation}
    For convenience, denote $x=1-\delta-s\epsilon$, $y=1-s$.
    Substituting \[\frac{1-(1-\delta-s\epsilon)^{n}}{\delta+s\epsilon}=\frac{1-x^{n}}{1-x}= S_n(x),\] and \[\frac{1-(1-s)^{n}}{s}=\frac{1-y^{n}}{1-y}= S_n(y),\] into \eqref{eq:qnv/qnu} yields,
\begin{equation}
         \frac{Q_n^{v}}{Q_n^{u}}=\epsilon R_n,
    \end{equation}
where we denote \[R_n=\frac{S_n(x)}{S_n(y)}.\]
   
    To complete the proof, we need to prove that when $x>y$, $R_{n+1}>R_n$, and when $x<y$, $R_{n+1}<R_n$. That is equivalent to showing, 
    \begin{equation}
        \begin{split}
           & \forall x>y: \frac{S_{n+1}(x)}{S_n(x)}>\frac{S_{n+1}(y)}{S_n(y)},\\
           & \forall x<y: \frac{S_{n+1}(x)}{S_n(x)}<\frac{S_{n+1}(y)}{S_n(y)}. 
        \end{split}
    \end{equation}
Let
    \begin{equation}
        f_n(t)=\frac{S_{n+1}(t)}{S_n(t)}.
    \end{equation} Thus, we need to show that $f_n(t)$ is monotone increasing for all $t\in [0,1)$. 
    We have:
    \begin{equation}
        f_n(t)=\frac{t^{n+1}-1}{t^{n}-1}.
    \end{equation}
    We differentiate and obtain: 
    \begin{equation}
        f_n^{\prime}(t)=\frac{t^{n-1}[t^{n+1}-(n+1)t+n]}{(t^{n}-1)^{2}}.
    \end{equation}
    Since $t^{n-1},(t^{n}-1)^{2}\geq0$, we prove that for $t\in [0,1)$, 
    \begin{equation}
        P_n(t)=t^{n+1}-(n+1)t+n\geq0.
    \end{equation}
    Indeed, since 
    \begin{equation}
         P_n^{\prime}(t)=(n+1)t^{n}-(n+1), \quad P_n^{''}= n(n+1)t^{n-1}\geq0,
    \end{equation}
    $P_n(t)$ has a unique global minimum in $t=1$. Therefore $P_n(t)\geq P_n(1)=0$ for all $t\in [0,1)$.

As for the third part of the proposition, substituting $\delta=\delta^{*}=s(1-\epsilon)$ into \eqref{eq:VEncalc}, we obtain:
\begin{equation}
    {\rm VE}_n=1-\frac{s\epsilon[1-(1-s)^{n}]}{s[1-(1-s)^n]}=1-\epsilon.
\end{equation}
\end{proof}

\end{appendices}



\end{document}